\pretocmd{\blx@head@bibintoc}{\phantomsection}{}{\ddt}
\titleformat*{\section}{\bfseries}
\titleformat*{\subsection}{\normalsize\bfseries}
\titleformat*{\subsubsection}{\bfseries}
\titleformat*{\paragraph}{\bfseries}
\titleformat*{\subparagraph}{\bfseries}
\titlespacing\section{0pt}{12pt plus 4pt minus 2pt}{2pt plus 2pt minus 2pt}
\definecolor{dullmagenta}{rgb}{0.4,0,0.4}   
\definecolor{darkblue}{rgb}{0,0,0.4}
\newcommand{\ketbra}[1]{|#1\rangle\langle #1|}
\renewcommand{\epsilon}{\varepsilon}
\renewcommand{\phi}{\varphi}
\def\eps{\epsilon}
\def\tr{{\rm Tr}}
\def\eps{\epsilon}
\def\id{\mathbbm{1}}
\def\bsc{\text{BSC}}
\def\pure{\text{BPC}}
\def\bec{\text{BEC}}
\def\cE{\mathcal E}
\def\cF{\mathcal F}
\def\pnr{\textup{P}_{0}}
\def\pinfo{\textup{P}_{1}}
\def\p{\textup{P}}
\def\cP{\mathcal P}
\def\cN{\mathcal N}
\def\cW{\mathcal W}
\definecolor{mblue}{rgb}{0.368417, 0.506779, 0.709798}
\definecolor{morange}{rgb}{0.880722, 0.611041, 0.142051}
\definecolor{mgreen}{rgb}{0.560181, 0.691569, 0.194885}
\definecolor{mred}{rgb}{0.922526, 0.385626, 0.209179}
\definecolor{mpurple}{rgb}{0.528488, 0.470624, 0.701351}
\newtheorem{theorem}{Theorem}
\newtheorem{lemma}{Lemma}
\newtheorem{corollary}{Corollary}
\newtheorem{definition}{Definition}
\newtheorem{fact}{Fact}
\newcommand*{\cC}{\mathcal{C}}
\newcommand*{\cD}{\mathcal{D}}
\newcommand{\proj}[1]{|#1\rangle\!\langle #1|}
\newcommand{\dyad}[2]{|#1\rangle\langle #2|}
\DeclareFontFamily{U}{mathx}{\hyphenchar\font45}
\DeclareFontShape{U}{mathx}{m}{n}{<-> mathx10}{}
\DeclareSymbolFont{mathx}{U}{mathx}{m}{n}
\DeclareMathAccent{\widebar}{0}{mathx}{"73}
\def\cP{\mathcal P}
\def\cN{\mathcal N}
\def\cW{\mathcal W}
\def\pnr{\textup{P}_{0}}
\def\pinfo{\textup{P}_{1}}
\def\p{\textup{P}}
\def\bsc{\text{BSC}}
\def\pure{\text{BPC}}
\def\bec{\text{BEC}}
\pgfplotsset{compat=1.14}
\newcommand{\syndrome}{S_{2}^{n}}
\newcommand{\codeblock}{V_1S_{2}^{n}}
\newcommand{\decoder}{\mathcal{U}_{Y^n\to V_1S_{2}^{n}}^{\mathcal{D}}}
\newcommand{\strings}{w_{2}^{n}\in \{0,1\}^{n-1}}
\newcommand{\identity}{\id}
\begin{document}

\title{Non-additivity in classical-quantum wiretap channels}

\author{\normalsize  Arkin Tikku$^1$, \href{https://orcid.org/0000-0002-0428-3429}{\color{black} Mario Berta}$^2$, and \href{http://orcid.org/0000-0003-2302-8025}{\color{black} Joseph M.\ Renes}$^3$\\
\small ${}^1$Centre for Engineered Quantum Systems, School of Physics, University of Sydney, Australia\\
\small ${}^2$Department of Computing, Imperial College London, England\\
\small ${}^3$Institute for Theoretical Physics, ETH Z\"urich, Switzerland}



\date{ \today}

\maketitle

\vspace{-2em}
\begin{abstract}
Due to Csisz\'ar and K\"orner, the private capacity of classical wiretap channels has a single-letter characterization in terms of the private information. For quantum wiretap channels, however, it is known that regularization of the private information is necessary to reach the capacity. Here, we study hybrid classical-quantum wiretap channels in order to resolve to what extent quantum effects are needed to witness non-additivity phenomena in quantum Shannon theory. For wiretap channels with quantum inputs but classical outputs, we prove that the characterization of the capacity in terms of the private information stays single-letter. Hence, entangled input states are of no asymptotic advantage in this setting. For wiretap channels with classical inputs, we show by means of explicit examples that the private information already becomes non-additive when either one of the two receivers becomes quantum (with the other receiver staying classical). This gives non-additivity examples that are not caused by entanglement and illustrates that quantum adversaries are strictly different from classical adversaries in the wiretap model.
\end{abstract}


\section{Introduction}\label{sec:intro}

In contrast to its classical counterpart, non-additivity phenomena of entropic expressions already make an appearance in some basic settings of quantum Shannon theory. This includes the quantum capacity \cite{shor_quantum_1996,divincenzo_quantum-channel_1998}, the private capacity \cite{smith_structured_2008,li_private_2009}, and the classical capacity of quantum channels \cite{hastings_superadditivity_2009}.
One of the most vexing such problems was posed by the additivity conjecture for the Holevo information \cite{shor_equivalence_2004}. 
Hastings disproved the conjecture with an example in which using entangled inputs to a quantum channel boosts the rate at which information can be transmitted~\cite{hastings_superadditivity_2009}. 

Here, we investigate private communication over wiretap channels to understand the essential quantum properties needed for entropic channel capacity formulas to become non-additive. Originally introduced by Wyner~\cite{wyner_wire-tap_1975}, a wiretap channel $\cW_{BC|A}$ has one input system $A$ for Alice and two outputs $B$ and $C$ to Bob and Charlie, respectively. The goal is then to transmit classical information represented by some finite message set $M=\{1,...,m\}$ from Alice to Bob, using a channel coding scheme of block size $n$ with encoder $\cE$ and decoder $\cD$ such that Bob can make a reliable inference $\hat{M}$ about the transmitted message $M$, without any information leaking to Charlie.
More precisely, the probability of error $P\{\hat{m}\neq m\}$ should be small for all $m$, say $P\{\hat{m}\neq m\}\leq \eps$ for some $\eps\in (0,1)$.
Moreover, Charlie's outputs for each $m$ should be nearly indistinguishable. Taking the general case of quantum state $\rho_C^m$ for input $m$, we require $\tfrac12\|\rho_C^m-\sigma_C\|_1\leq \eps$ for some $\sigma_C$ and all $m\in M$, where for simplicity we take the same parameter $\eps$ (see, e.g.\ \cite[Chapter 23]{wilde_quantum_2017}). 
A given coding scheme of size $n$ and parameter epsilon has a rate $R(n,\eps)=\frac1n\log |M|$, and the supremum of achievable rates for integer $n$ and $\eps\in(0,1)$ defines the capacity $\p(\cW)$.

In this work, we are interested in whether the private capacity $\p(\cW)$ has a single-letter expression in terms of the private information
\begin{align}\label{eq:private-info}
\pinfo(\cW):=\max_{p_{V},\rho_{A}^v} I(V:B)_\omega-I(V:C)_\omega\\
\mathrm{with}\quad\omega_{VBC}:=\sum_v p_{V}(v)\ketbra{v}_V\otimes \cW_{BC|A}\left(\rho_{A}^v\right)\,.
\end{align}
Here, $p_V$ is a probability distribution over an auxiliary random variable $V$ and $\rho_{A}^v$ is a quantum state conditional on the value of $V$ and with support on $A$.
For quantum wiretap channels, the private capacity is known to be characterized by the regularization of the private information~\cite{devetak_private_2005,cai_quantum_2004}, i.e.\
\begin{align} \label{eqn:priv_capacity}
\p(\cW)=\lim_{n\to \infty} \frac{1}{n} \pinfo(\cW^{\otimes n})\,.
\end{align}
This is precisely how the analysis proceeds in the case of classical wiretap channels as well, but Csisz\'ar and K\"orner~\cite{csiszar_broadcast_1978} further showed that the private information is additive. That is, we have $\pinfo(\cW_1\otimes \cW_2)=\pinfo(\cW_1)+\pinfo(\cW_2)$ making the regularization unnecessary. Our main result in this paper is that if two of the parties in the wiretap channel are chosen to be classical, then $\p(\cW)=\pinfo(\cW)$ when the \emph{input} is quantum, while there exist channels for which $\p(\cW)>\pinfo(\cW)$ if either \emph{output} is quantum. Hence, in this scenario neither do entangled inputs allow for non-additivity effects to occur, nor are they necessary for them!

The rest of the paper is structured as follows. In Section~\ref{sec:setting} we fix our notation, and in Section~\ref{sec:privateinfo} we examine general properties of wiretap channels and the private information. In particular, we show that one can always take $|V|\leq |A|^2$ in the private information optimization \eqref{eq:private-info}. Then we show additivity for quantum inputs in Section~\ref{sec:quantum-alice} and give the non-additive examples in Section~\ref{sec:quantum-bob} and Section~\ref{sec:quantum-eve}. 


\section{Setting}\label{sec:setting}

\subsection{Systems}

Quantum systems are denoted by $A,B,C$ and have finite dimensions $|A|,|B|,|C|$, respectively. Quantum states are linear, positive semi-definite operators of trace one and denoted by $\rho_A\in\cD(A)$, where the subscript denotes the support of the operator. Quantum states $\rho_A\in\cD(A)$ are called pure if they are of rank one, in which case we also write $\rho_A=\proj{\psi}_A$. Quantum channels $\cW_{B|A}$ from $A$ to $B$ correspond to completely positive and trace-preserving maps from the linear operators on $A$ to the linear operators on $B$.
Classical systems are denoted by $V,W,X,Y,Z$ and have finite dimensions $|V|,|W|,|X|,|Y|,|Z|$, respectively. Classical states are density matrices diagonal in the computational basis $\{\proj{x}\}_{x\in X}$ and denoted by $\rho_X \in \cD(X)$. Classical channels from $X$ to $Y$ correspond to conditional probability distributions $p_{Y|X}(y|x)$, but may also at times be denoted by $\cN_{Y|X}$, with the support indicating the classical domain and target. The notation $X^m=(X_1,...,X_{m})$ denotes an $m$-tuple of registers and will be used in the context of channel coding to denote a code-block of length $m$ that encodes a logical system $X$.


\subsection{Entropies}

For $\rho_{ABC}\in\cD(ABC)$ and its reduced states, the {\it entropy} is defined as $H(A)_\rho:=-\mathrm{Tr}\left[\rho_A\log\rho_A\right]$ (where logarithms are taken base 2), the {\it conditional entropy} of $A$ given $B$ as $H(A|B)_\rho:=H(AB)_\rho-H(B)_\rho$, the {\it mutual information} between $A$ and $B$ as $I(A:B)_\rho:=H(A)_\rho+H(B)_\rho-H(AB)_\rho$, and the {\it conditional mutual information} between $A$ and $B$ given $C$ as $I(A:B|C)_\rho:=H(AC)_\rho+H(BC)_\rho-H(ABC)_\rho-H(C)_\rho$. Here and henceforth any quantum definition applies to classical probability distributions as well\,---\,by embedding them as matrices diagonal in the computational basis $\{\proj{x}\}_{x\in X}$.


\subsection{Wiretap Channels}

A {\it wiretap channel} is given by a quantum channel $\cW_{BC|A}$ with one sender Alice $A$ and two receivers Bob $B$ and Charlie $C$, where Bob acts as the legitimate receiver and Charlie as the adversarial party. Note that we do not require $\cW_{BC|A}$ to be an isometric channel (as often done in the literature), i.e.\ the channel to Charlie is not necessarily the complement of the channel to Bob. We are then interested in hybrid classical-quantum settings, where some of the systems are classical. This leads us to use the following definition:

\begin{definition}
A reduction $\cW_{B|A}$ $(\cW_{C|A})$ of a wiretap channel $\cW_{BC|A}$ to the legitimate receiver (adversarial party) is obtained by tracing out the adversarial party (legitimate receiver), i.e 
\begin{align}
    \cW_{B|A}(\cdot):=\tr_{C}\left[\cW_{BC|A}(\cdot)\right] \quad \textit{or} \quad \cW_{C|A}(\cdot):=\tr_{B}\left[\cW_{BC|A}(\cdot)\right]\,.
\end{align}
We will denote wiretap channels with classical inputs as $\cW_{BC|X}:=\cW_{B|X}/\cW_{C|X}$, indicating their construction from their reductions via $\cW_{B|X}/\cW_{C|X}:= \left(\cW_{B|X'}\otimes\cW_{C|X''}\right)\circ \cC_{X'X''|X}$, where $\cC_{X'X''|X}$ is a stochastic map that creates a copy of the classical input system $X$. 
\end{definition}

\noindent We will consider reductions that are either fully classical channels or classical-quantum channels. 
From these, we will then construct the following types of wiretap channels:

\begin{definition}
A quantum-classical-classical (qcc) wiretap channel with quantum sender Alice $A$, but classical receivers Bob $Y$ and Charlie $Z$ is given by
\begin{align}
\cW_{YZ|A}(\cdot)=\sum_{y,z}\mathrm{Tr}\big[\Lambda_A^{y,z}(\cdot)\big]\proj{y}_Y\otimes\proj{z}_Z
\end{align}
with $\{\Lambda_A^{y,z}\}_{y,z}$ forming a POVM, such that $\Lambda_A^{y,z}\geq0$ and $\sum_{y,z}\Lambda_A^{y,z}=\identity_A$.
\end{definition}

This setting is notable because it allows for entangled inputs to the wiretap channel, but only separable states at the outputs. It thus raises the question of whether using entangled input states can boost the rate at which information can be transmitted, despite the fact that each reduction is an entanglement-breaking channel. 

\begin{definition}
A classical-quantum-classical (cqc) wiretap channel with classical sender Alice $X$ and classical adversarial receiver Charlie $Z$, but quantum legitimate receiver Bob $B$ is given by
\begin{align}
\cW_{BZ|X}(\cdot)=\sum_x\bra{x}\cdot\ket{x}\left(\sum_{z}p(z|x)\rho_B^z\otimes\proj{z}_Z\right)
\end{align}
for conditional probability distributions $p(z|x)$ and quantum states $\rho_B^z$.
\end{definition}

Here, only separable states that are diagonal in the computational basis are allowed as inputs to the channel, so one might think that the private information is additive in this case. 
\begin{definition}
A classical-classical-quantum (ccq) wiretap channel with classical sender Alice $X$ and classical legitimate receiver Bob $Y$, but quantum adversarial receiver Charlie $C$ is given by
\begin{align}
\cW_{YC|X}(\cdot)=\sum_x\bra{x}\cdot\ket{x}\left(\sum_yp(y|x)\proj{y}_Y\otimes\rho_{C}^y\right)
\end{align}
for conditional probability distributions $p(y|x)$ and quantum states $\rho_C^y$.
\end{definition}

Again, one might expect the private information to be additive given that only separable states diagonal in the computational basis are allowed for inputs. Some simple channels that we will employ to construct wiretap channel examples are as follows.
First, the binary symmetric channel with crossover probability $p$ is denoted $\bsc{(p)}$.
It may also be thought of as a quantum channel $\cN_{B|A}(\cdot):=(1-p)(\cdot)+ pX(\cdot)X$ for the Pauli $X$-matrix on system $B$.
The binary erasure channel with erasure probability $p$ is denoted $\bec(p)$. 
Again a fully quantum binary erasure channel $\cN_{B|A}$ may be defined by $\cN_{B|A}(\cdot)= (1-p)(\cdot)+p \ketbra{e}_{B}$, where $\ket{e}_{B}$ is orthogonal to $A$.
\begin{definition}
The binary pure state channel $\pure{(f)}$ with fidelity $f$ is a classical-quantum channel $\cN_{A|X}$ with pure state outputs $\dyad{\psi}{\psi}$ and $\dyad{\phi}{\phi}$, such that $f:=|\langle\phi|\psi\rangle|$ and
\begin{align}
 \dyad{0}{0}_{X}\longrightarrow \dyad{\psi}{\psi}_{A} \quad \text{and} \quad \dyad{1}{1}_{X}\longrightarrow \dyad{\phi}{\phi}_{A}\,.
\end{align}
\end{definition}



\section{Properties of the private information}\label{sec:privateinfo}

To evaluate the private information for the case of classical inputs in later sections, we follow ideas from the classical work \cite{ozel_wiretap_2013} and start by rewriting
\begin{align}\label{eq:p1asf}
\pinfo(\cW)=\max_{\left\{p_V,\rho_{A}^v \right\} } f_{\cW}\left(\mathbb E_{V} \left[\rho_{A}^v\right]\right)-\mathbb E_V \left[f_{\cW}\left(\rho_{A}^v\right)\right]\quad\mathrm{with}\quad f_{\cW}(\rho_A):=I(X:B)_\omega-I(X:C)_\omega\,,
\end{align}
where $\rho_A=\sum_{x}p_{X}(x)\ketbra{x}_{A}$ for $\{\ket{x}_{A}\}$ the eigenbasis of $\rho_A$, and $\omega_{XBC}=\sum_{x}p_{X}(x)\ketbra{x}_{X}\otimes \cW_{BC|A}\left(\ketbra{x}_{A}\right)$. This expression still holds for quantum inputs and is in general hard to evaluate, as the underlying optimization problem is non-convex. However, we immediately have the upper bound
\begin{align} \label{eq:upper_bound}
\pinfo(\cW)\leq\max_{\rho_A} f_{\cW}(\rho_A)-\min_{\rho_A} f_{\cW}(\rho_A)\,.
\end{align}
since the minimum of any function is a lower bound to its expectation value.  
Moreover, whenever $f_{\cW}\geq0$ for all input states $\rho_A$, then $\min_\rho f_{\cW}=0$, with any pure input state $\rho_A=\ketbra{x}_{A}$ being a valid minimizing argument and hence
\begin{align}\label{eq:more-capable}
\pinfo(\cW)=\pnr(\cW):=\max_{\rho_A} f_{\cW}(\rho_A)\,.
\end{align}
We call wiretap channels $\cW$ with $f_{\cW}\geq 0$ {\it more-capable} \cite{korner_comparison_1977}, where for the classical case this can be seen as the sufficiency to choose $V=X$ in \eqref{eq:private-info}. Whenever the function $f_{\cW}$ is concave, we call the wiretap channel $\cW$ {\it less-noisy}. Such channels are then in particular also more-capable and thus the optimization problem in \eqref{eq:more-capable} becomes convex and therefore easily tractable. Contrary to the classical case, however, we do  not know if less-noisy wiretap channels have additive private information in general.\footnote{Our definition of less-noisy and more-capable conflicts with Watanabe's work for the quantum case \cite{watanabe_private_2012}. He introduces a notion that we might be tempted to call completely less-noisy and completely more-capable, as it involves the less noisy or more capable condition applied to arbitrarily many instances of the channel. His notions are equivalent to ours for the classical case \cite{korner_comparison_1977}, and lead to an additive private information.} We also have the notions of {\it anti-less-noisy} and {\it anti-more-capable} wiretap channels, where the roles of the legitimate receiver Bob and adversarial receiver Charlie are interchanged. Note that for anti-less-noisy channels $\pinfo(\cW)=0$, as well as that for anti-more-capable channels $\pnr(\cW)=0$. Finally, a well-known sufficient criterion for additivity of the private information for general (i.e. not necessarily isometric) quantum wiretap channels, is {\it degradability} \cite{cover_broadcast_1972,devetak_capacity_2005,smith_private_2008}. That is, when there exists a channel $\cE_{C|B}$ such that $W_{C|A}=\cE_{C|B}\circ W_{B|A}$. The private information is also additive for {\it anti-degradable} channels, when there exists a channel $\cF_{B|C}$ such that $W_{B|A}=\cF_{B|C}\circ W_{C|A}$. However, then we immediately have $\p(\cW)=\pinfo(\cW)=0$.

Next, we prove a cardinality upper bound for the private information.

\begin{lemma}[Fenchel-Eggleston {\cite[Theorem 18]{eggleston_convexity_1958}}]\label{lem:Fenchel}
Let $S\subseteq\mathbb{R}^n$ such that $S=\bigcup_{i=1}^nS_i$ with $S_i$ connected. Then, we have for every $y\in\text{conv}(S)$ that there exists $S'\subseteq S$ with $|S'|\leq n$ such that $y\in\text{conv}(S')$.
\end{lemma}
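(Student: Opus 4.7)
My plan is to sharpen the conclusion of Carathéodory's theorem by one point, using the pigeonhole principle together with an intermediate-value argument on a connected component. I would begin by applying Carathéodory's theorem in $\mathbb{R}^n$: every $y \in \conv(S)$ can be written as $y = \sum_{j=1}^{n+1} \lambda_j x_j$ with $x_j \in S$, $\lambda_j \geq 0$, and $\sum_j \lambda_j = 1$. If any $\lambda_j = 0$, one is immediately done with $S' = \{x_j : \lambda_j > 0\}$. Otherwise all $\lambda_j > 0$, and among all $(n+1)$-point representations with strictly positive weights I would pick one whose supporting points are affinely independent---any affine dependence would permit the standard Carathéodory perturbation to zero out a weight and already reduce to $n$ points. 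Since the $n+1$ chosen points are distributed among only $n$ connected components, pigeonhole places two of them, say $x_1$ and $x_2$, inside a common component $S_k$.

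Next I would continuously deform $x_1$ within $S_k$ while holding $x_2, \ldots, x_{n+1}$ fixed, and track how the barycentric coordinates of $y$ respond. On the open subset $U \subseteq S_k$ where $\{x, x_2, \ldots, x_{n+1}\}$ remains affinely independent, the coefficient $\mu_1(x)$ in the unique expansion $y = \mu_1(x)\, x + \sum_{j \geq 2} \mu_j(x) x_j$ with $\sum_j \mu_j(x) = 1$ depends continuously (in fact rationally) on $x$, and satisfies $\mu_1(x_1) = \lambda_1 > 0$. The plan is to exhibit a second point of $S_k$ at which $\mu_1$ is non-positive; then by connectedness of $S_k$ and the intermediate value theorem for real-valued continuous functions on connected sets, $\mu_1$ vanishes at some $x^\star \in S_k$, producing the representation $y = \sum_{j \geq 2} \mu_j(x^\star) x_j$ that uses only $n$ points of $S$.

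The main obstacle is precisely this sign-change step: one must argue that as $x$ migrates toward $x_2$ across the affinely-degenerate locus the coefficient $\mu_1$ actually flips sign, rather than staying strictly positive throughout $S_k$. For abstractly (not necessarily path-)connected components, a cleaner fallback is to put both $x_1$ and $x_2$ in motion simultaneously, working on the connected product $S_k \times S_k$ and appealing either to a compactness/limit argument at the degenerate boundary or to a topological-degree argument to extract a boundary representation with one vanishing weight. This topological crux is where the connectedness hypothesis really earns its keep, and it is the place where the argument departs nontrivially from the elementary Carathéodory reduction.
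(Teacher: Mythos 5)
First, note that the paper does not prove this lemma at all: it is imported verbatim from Eggleston's monograph (Theorem~18 there) and used as a black box, so there is no in-paper argument to compare yours against. Judged on its own merits, your attempt sets up the right skeleton (Carath\'eodory, reduction to an affinely independent support with strictly positive weights, pigeonhole to place two support points $x_1,x_2$ in a common connected piece $S_k$), but the step you yourself flag as the crux is not merely unfinished --- it targets the wrong function and cannot be completed as stated. The barycentric coefficient $\mu_1(x)$ of the \emph{moving} point in $y=\mu_1(x)\,x+\sum_{j\ge2}\mu_j(x)\,x_j$ never vanishes: by Cramer's rule, $\mu_1(x)$ equals the fixed signed volume $\det[y-x_2,\,x_3-x_2,\dots,x_{n+1}-x_2]$ --- which is nonzero because $\lambda_1>0$ and affine independence force $y\notin\mathrm{aff}(x_2,\dots,x_{n+1})$ --- divided by the affine function $\det[x-x_2,\,x_3-x_2,\dots,x_{n+1}-x_2]$ of $x$. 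Equivalently, $\mu_1(x^\star)=0$ would give $y=\sum_{j\ge2}\mu_j(x^\star)x_j\in\mathrm{aff}(x_2,\dots,x_{n+1})$, which is impossible; note also that this putative final representation does not even contain your new point $x^\star$. The coefficient $\mu_1$ changes sign only by blowing up across the hyperplane $\mathrm{aff}(x_2,\dots,x_{n+1})$, never by passing through zero, so no intermediate-value argument can produce the zero you are after; and in any case $\mu_1$ is defined only on $S_k\cap U$, which need not be connected even when $S_k$ is.

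The coefficient you should drive to zero is that of the \emph{stationary partner} $x_2$. Concretely, $\mu_2(x)$ vanishes exactly when $x$ lies on $H:=\mathrm{aff}(y,x_3,\dots,x_{n+1})$, and one checks that $H$ strictly separates $x_1$ from $x_2$: the point $(\lambda_1x_1+\lambda_2x_2)/(\lambda_1+\lambda_2)$ lies on $H$, while neither endpoint does, by affine independence. Connectedness of $S_k$ then forces $S_k\cap H\neq\emptyset$, yielding $y=\mu_1(x^\star)\,x^\star+\sum_{j\ge3}\mu_j(x^\star)x_j$ with only $n$ points. Even then you are not done: this is a priori only an \emph{affine} combination, and showing that a point of $S_k\cap H$ can be chosen so that all coefficients are nonnegative is precisely the delicate part of Eggleston's proof (handled there by a first-crossing/closedness argument), which your proposal does not address; your fallback appeal to ``a compactness/limit argument or a topological-degree argument'' is a gesture, not a proof. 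As it stands the argument has a genuine gap at its central step.
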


\begin{lemma}[Cardinality bound]\label{lem:card_bound}
Let registers $V,A$ be defined as in \eqref{eq:private-info}. Then, the maximization of \eqref{eq:private-info} is achievable using an ensemble over input states for which $|V|\leq|A|^{2}$. For classical input systems $X$ the bound reduces to $|V|\leq|X|$.
\end{lemma}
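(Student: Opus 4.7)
The plan is to apply Fenchel--Eggleston (Lemma~\ref{lem:Fenchel}) to a reparametrization of the private-information objective in terms of only two summary statistics of the ensemble. A direct computation gives
\[
I(V:B)_\omega - I(V:C)_\omega = H\!\left(\cW_{B|A}(\bar\rho_A)\right) - H\!\left(\cW_{C|A}(\bar\rho_A)\right) - \mathbb{E}_V\!\left[g(\rho_A^v)\right],
\]
where $\bar\rho_A := \mathbb{E}_V[\rho_A^v]$ and $g(\rho_A) := H(\cW_{B|A}(\rho_A)) - H(\cW_{C|A}(\rho_A))$. The objective therefore depends on the ensemble $\{p_V,\rho_A^v\}$ only through $\bar\rho_A$ together with the scalar $\alpha := \mathbb{E}_V[g(\rho_A^v)]$, so it suffices to produce a smaller ensemble reproducing both.

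Next I would form the graph map $\Phi:\cD(A)\to\mathbb{R}^{|A|^{2}}$, $\rho_A\mapsto(\rho_A,g(\rho_A))$, under the standard identification of trace-one Hermitian operators on $A$ with an $(|A|^{2}-1)$-dimensional real affine space. Setting $S:=\Phi(\cD(A))$, the set $\cD(A)$ is convex and hence path-connected, while $g$ is continuous in finite dimension, so $S$ is a connected subset of $\mathbb{R}^{|A|^{2}}$. For any input ensemble, the target point $(\bar\rho_A,\alpha)=\sum_v p_V(v)\Phi(\rho_A^v)$ belongs to $\conv(S)$. Invoking Lemma~\ref{lem:Fenchel} with $n=|A|^{2}$ (taking $S_1=S$ and the remaining $S_i$ to be singletons contained in $S$) then yields a subset $S'\subseteq S$ with $|S'|\leq|A|^{2}$ whose convex hull still contains $(\bar\rho_A,\alpha)$. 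Pulling back through $\Phi$ produces an ensemble $\{q_W,\sigma_A^w\}_{w=1}^{|A|^{2}}$ with identical $\bar\rho_A$ and $\alpha$, and hence identical value of $I(V:B)-I(V:C)$; maximizing on both sides gives the stated bound.

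For classical inputs $X$, the state $\rho_X$ is diagonal in the computational basis and therefore lives in the probability simplex of dimension $|X|-1$, so the analogous graph sits in $\mathbb{R}^{|X|}$ and the same argument yields $|V|\leq|X|$. The main care required is accounting: one must verify that $g$ is well-defined and continuous on all of $\cD(A)$ (which sidesteps the eigenbasis ambiguities implicit in $f_\cW$) and that the ambient dimension of the graph is exactly $|A|^{2}$, so that Fenchel--Eggleston is applied with the correct value of $n$. I do not expect a genuine obstacle beyond this book-keeping.
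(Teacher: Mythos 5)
Your proposal is correct and follows essentially the same route as the paper: reduce the objective to the pair $(\bar\rho_A,\mathbb{E}_V[g(\rho_A^v)])$, embed the graph of $g$ in $\mathbb{R}^{|A|^2}$ (your affine parametrization of trace-one Hermitian operators is the Bloch-vector parametrization the paper uses, up to a linear isomorphism), and apply Fenchel--Eggleston with $n=|A|^2$. Your version is in fact slightly more careful on two points the paper glosses over\,---\,that the point to be decomposed is $\sum_v p_V(v)\Phi(\rho_A^v)$ rather than the image of $\bar\rho_A$ under the (nonlinear) graph map, and that the hypothesis of Lemma~\ref{lem:Fenchel} is met by taking $S_1=S$ connected\,---\,but these are refinements of the same argument, not a different one.
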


\begin{proof}
Note that we can alternatively write \eqref{eq:p1asf} as
\begin{align}
\pinfo(\cW)=\max_{\left\{p_V,\rho_{A}^v\right\}} g_{\cW}(\mathbb E_{V} \left[\rho_{A}^v\right])-\mathbb E_V \left[g_{\cW}(\rho_{A}^v)\right]\,
\end{align}
with $g_{\cW}(\rho_A):=H(B)_{\cW(\rho)}-H(C)_{\cW(\rho)}$, since 
\begin{align}
g_{\cW}(\mathbb E_{V} \left[\rho_{A}^v\right])-\mathbb E_V \left[g_{\cW}(\rho_{A}^v)\right]
&= H\Big(\sum_{v}p_{V}(v)\rho_{B}^{v}\Big)-\sum_{v}p_{V}(v)H(\rho_{B}^{v})\notag\\
&\phantom{=}-H\Big(\sum_{v}p_{V}(v)\rho_{C}^{v}\Big)+\sum_{v}p_{V}(v)H(\rho_{C}^{v})\,.
\end{align}
Then, following \cite[Theorem 17.11]{csiszar_information_2011} we consider the function $h$ which maps any state $\rho$ to its  Bloch vector, as well as the value $g_\cW(\rho)$. Recall that the Bloch vector has $d^2-1$ components, where $d$ is the Hilbert space dimension, and completely specifies the state $\rho$. Since $h$ is continuous, the image $S$ of the set of states under $h$ is a compact, convex, and connected set in $\mathbb R^{d^2}$. Now, suppose $p_V$ and $\rho_{A}^{v}$ are optimal, leading to an average state $\rho_A:=\sum_{v}p_{V}(v)\rho_{A}^{v}$. By Fenchel-Eggleston's strengthening of Carath\'eodory's theorem, as given in Lemma~\ref{lem:Fenchel}, $h(\rho_A)$ can be represented as $\sum_{i=1}^{d^2} p_i s_i$ for suitable probabilities $p_i$ and points $s_i\in S$. For each $i$, the first $d^2-1$ components of $s_i$ specify a state $\rho_i$, while the last component is $h_\cW(\rho_i)$. Therefore, there exists a random variable $V'$ of cardinality $d^2$ with $p_{V'}(i)=p_i$ and a preparation map $\cP'_{A|V}$ with $\cP'_{A|V=i}=\rho_i$ which is also optimal. If we restrict the possible inputs to $\cW$ to form a commuting set, i.e.\ diagonal in some basis, then only $d-1$ components are needed for the Bloch vector, and we recover the classical cardinality bound $|V|\leq |X|$.
\end{proof}


\section{Quantum Sender Alice}\label{sec:quantum-alice}

Here, we prove that the private information is additive for wiretap channels $\cW_{YZ|A}$ with quantum input system $A$ but classical output systems $YZ$.

\begin{theorem} \label{thm:qAlice}
Let $\cW_{1}:=\cW_{Y_1Z_1|A_1}$ and $\cW_{2}:=\cW_{Y_2Z_2|A_2}$ be two qcc-wiretap channels with quantum senders $A_1,A_2$, classical legitimate receivers $Y_1,Y_2$ and classical adversaries $Z_1,Z_2$. Then, we have
\begin{align}
\pinfo(\cW_{1}\otimes \cW_{2})=\pinfo(\cW_{1})+\pinfo(\cW_{2})\,.
\end{align} 
\end{theorem}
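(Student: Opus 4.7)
The plan is to establish the nontrivial direction $\pinfo(\cW_1\otimes\cW_2)\leq\pinfo(\cW_1)+\pinfo(\cW_2)$, since the opposite inequality is immediate from plugging tensor products of optimal single-channel ensembles into \eqref{eq:private-info}. The crucial feature of the qcc setting to exploit is that even for entangled inputs $\rho_{A_1A_2}^v$ the product POVM yields a genuine classical joint distribution
\begin{align}
p(v,y_1,z_1,y_2,z_2)=p_V(v)\,\tr\bigl[(\Lambda_{A_1}^{y_1,z_1}\otimes\Lambda_{A_2}^{y_2,z_2})\rho_{A_1A_2}^v\bigr]\,,
\end{align}
so the analysis on the output side can proceed with purely classical Shannon entropies.

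Starting from an optimal ensemble for $\pinfo(\cW_1\otimes\cW_2)$, I would apply the Csisz\'ar--K\"orner sum identity
\begin{align}
I(V:Y_1Y_2)-I(V:Z_1Z_2)=\bigl[I(V:Y_1|Z_2)-I(V:Z_1|Z_2)\bigr]+\bigl[I(V:Y_2|Y_1)-I(V:Z_2|Y_1)\bigr]\,,
\end{align}
which is verified by a short chain-rule calculation on any classical joint distribution of this form. The problem then reduces to bounding each bracket by $\pinfo$ of the appropriate single channel.

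The main technical step is to realise each bracket as the private information of a bona fide single-letter ensemble. For the first bracket, I would carry out the measurement on $A_2$ in two stages, starting with the coarse-grained POVM $M_{A_2}^{z_2}:=\sum_{y_2}\Lambda_{A_2}^{y_2,z_2}$. Conditional on the outcome $z_2$, the residual state on $A_1$ is the valid density operator
\begin{align}
\rho_{A_1}^{v,z_2}:=\frac{\tr_{A_2}\bigl[(\id_{A_1}\otimes M_{A_2}^{z_2})\rho_{A_1A_2}^v\bigr]}{p(z_2|v)}\,,
\end{align}
and feeding this into $\cW_1$ reproduces the conditional distribution $p(y_1,z_1|v,z_2)$ of the original joint. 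Hence for each $z_2$ the collection $\{p(v|z_2),\rho_{A_1}^{v,z_2}\}_v$ is an admissible input ensemble for $\cW_1$ and its private information is at most $\pinfo(\cW_1)$; averaging over $z_2$ with weights $p(z_2)$ preserves the bound. The second bracket is handled symmetrically by using the coarse-grained POVM $\sum_{z_1}\Lambda_{A_1}^{y_1,z_1}$ on $A_1$, conditioning on $y_1$, and applying $\cW_2$ to the induced residual state on $A_2$.

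I expect the principal obstacle to be conceptual rather than computational: one must check that this partial-measurement-and-conditioning reduction genuinely yields a valid input ensemble for the remaining channel. The qcc assumption is doing real work here, since it is precisely the classicality of both $(Y_1,Z_1)$ and $(Y_2,Z_2)$ that allows one to condition on them and produce well-defined residual density operators on the complementary system\,---\,a step that would break down if either receiver were quantum. Once both bracket estimates are in place, summing them through the Csisz\'ar--K\"orner identity yields subadditivity, closing the argument.
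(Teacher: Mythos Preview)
Your proposal is correct and follows essentially the same route as the paper: the Csisz\'ar--K\"orner chain-rule identity splits the objective into two brackets, and for each one you coarse-grain the appropriate POVM (summing over $y_2$ or $z_1$), condition on the remaining classical outcome, and recognise the resulting residual states $\rho_{A_1}^{v,z_2}$ (respectively $\rho_{A_2}^{v,y_1}$) as a bona fide single-channel ensemble bounded by $\pinfo(\cW_i)$. The only cosmetic difference is that the paper passes from the conditional mutual information to a maximum over the conditioning value before invoking the single-letter bound, whereas you average; both are valid since each conditional term is individually dominated by $\pinfo(\cW_i)$.
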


\begin{proof}
The proof makes use of a variant of the classical key identity in \cite[Lemma 17.12]{csiszar_information_2011}. Suppose $\rho_{VA_1A_2}$ is the optimizer in $\pinfo(\cW_1\otimes \cW_2)$ and call the outputs $Y_1Y_2$ for Bob and $Z_1Z_2$ for Charlie. Let $\cW_1:=\cW_{Y_1Z_1|A_1}$ and $\cW_2:=\cW_{Y_2Z_2|A_2}$, then for the probability distribution $p_{VY_1Y_2Z_1Z_2}=\left(\cW_{Y_1Z_1|A_1}\otimes \cW_{Y_2Z_2|A_2}\right)(\rho_{VA_1A_2})$
we have
\begin{align}
\pinfo(\cW_1\otimes \cW_2)
&=I(V:Y_1Y_2)_{p}-I(V:Z_1Z_2)_{p}\\
&=I(V:Y_1|Z_2)_{p}-I(V:Z_1|Z_2)_{p}+I(V:Y_2|Y_1)_{p}-I(V:Z_2|Y_1)_{p}\\
&\leq \max_{z_2}\left[ I(V:Y_1|Z_2=z_2)_{p}-I(V:Z_1|Z_2=z_2)_{p}\right]\notag \\
&\phantom{\leq}+\max_{y_1}\left[I(V:Y_2|Y_1=y_1)_{p}-I(V:Z_2|Y_1=y_1)_{p}\right]\,.
\end{align}
The second equation follows using the chain rule for the conditional mutual information, while the first inequality follows since conditioning is equivalent to averaging. Now, consider the first maximization, for which we require the joint distribution $p_{VY_1Z_1Z_2}$. Suppose the optimal input state has the form
\begin{align}
\rho_{VA_1A_2}=\sum_v p_{V}(v)\ketbra{v}_V\otimes \varphi^{v}_{A_1A_2}\,.
\end{align}
The channels are just measurements, so letting $\Lambda^{y_1,z_1}$ and $\Gamma^{y_2,z_2}$ be the associated POVM elements for $\cW_1$ and $\cW_2$ respectively yields 
\begin{align}
p_{VY_1Z_1Z_2}(v,y_1,z_1,z_2)
=p_V(v)\sum_{y_2}\tr\left[(\Lambda_{A_1}^{y_1,z_1}\otimes \Gamma_{A_2}^{y_2,z_2})\varphi_{A_1A_2}^{v}\right]\,.
\end{align}
Now, define the normalized states $\sigma_{A_1}^{v,z_2}$ via
\begin{align}
p_{Z_2|V=v}(z_2)\sigma_{A_1}^{v,z_2}=\sum_{y_2}\tr_{A_2}\left[\Gamma_{A_2}^{y_2,z_2}\rho_{A_1A_2}^{v}\right]\,,
\end{align}
for $p_{Z_2|V}$ the conditional distribution computed from the distribution $p_{VY_1Z_1Z_2}$. Writing in the decomposition $p_V(v)\, p_{Z_2|V=v}(z_2)=p_{Z_2}(z_2)\,p_{V|Z_2=z_2}(v)$, we obtain
\begin{align}
p_{VY_1Z_1|Z_2=z_2}(v,y_1,z_1)=p_{V|Z_2=z_2}(v)\,\tr\left[\Lambda^{y_1,z_1}_{A_1}\sigma^{v,z_2}_{A_1}\right]\,.
\end{align}
Thus, we have confirmed that when conditioning on the value of $Z_2$, the outputs $Y_1$ and $Z_1$ are related to $V$ via $\cW_{1}$ composed with a preparation channel $\cP_{A_1|VZ_2}$. Therefore, we have 
\begin{align}
\max_{z_2} I(V:Y_1|Z_2=z_2)_p-I(V:Z_1|Z_2=z_2)_p\leq \pinfo(\cW_1)\,.
\end{align}
A similar argument holds for the second term, implying $\pinfo(\cW_1\otimes \cW_2)\leq \pinfo(\cW_1)+\pinfo(\cW_2)$. The other inequality holds since, for the optimal $\rho_{V_1A_1}$ ($\rho_{V_2A_2}$) in $\pinfo(\cW_1)$ ($\pinfo(\cW_2)$), the product state $\rho_{V_1A_1}\otimes \rho_{V_2A_2}$ is feasible in $\pinfo(\cW_1\otimes \cW_2)$ with $V=(V_1,V_2)$.
\end{proof}

Notably, this shows that entangled input states are of no use and gives a novel single-letter characterization in quantum Shannon theory. 
Applying Theorem \ref{thm:qAlice} inductively to 
\eqref{eqn:priv_capacity} gives the following characterization.

\begin{corollary}
For quantum-classical-classical wiretap channels $\cW_{YZ|A}$ we have that $\p(\cW_{YZ|A})=\pinfo(\cW_{YZ|A})$.
\end{corollary}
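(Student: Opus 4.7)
The plan is essentially to chain together two ingredients that are already available: the regularized formula \eqref{eqn:priv_capacity} for the private capacity of a quantum wiretap channel, and the single-copy additivity statement in Theorem~\ref{thm:qAlice}. Concretely, starting from $\p(\cW_{YZ|A})=\lim_{n\to\infty}\frac{1}{n}\pinfo(\cW_{YZ|A}^{\otimes n})$, I want to argue that every term in the sequence on the right-hand side is equal to $\pinfo(\cW_{YZ|A})$, at which point the limit is trivial.

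The key observation that makes this work is that the tensor power $\cW_{YZ|A}^{\otimes n}$ is itself a qcc-wiretap channel (the joint channel still has a quantum sender on $A^n$ and classical receivers $Y^n$ and $Z^n$, since the output systems are just tensor products of classical registers). Therefore Theorem~\ref{thm:qAlice} applies to the factorization $\cW_{YZ|A}^{\otimes n}=\cW_{YZ|A}\otimes \cW_{YZ|A}^{\otimes (n-1)}$. I would then proceed by induction on $n$: the base case $n=1$ is immediate, and in the inductive step I use Theorem~\ref{thm:qAlice} to split off one copy, giving
\begin{equation}
\pinfo(\cW_{YZ|A}^{\otimes n})=\pinfo(\cW_{YZ|A})+\pinfo(\cW_{YZ|A}^{\otimes (n-1)})=n\,\pinfo(\cW_{YZ|A}),
\end{equation}
where the second equality uses the induction hypothesis.

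Dividing by $n$ and taking $n\to\infty$ in \eqref{eqn:priv_capacity} then yields $\p(\cW_{YZ|A})=\pinfo(\cW_{YZ|A})$, as claimed. There is no real obstacle here: the only thing to verify is that the hypothesis of Theorem~\ref{thm:qAlice} is preserved under tensor products, i.e.\ that classical outputs on each copy combine into classical outputs on the composite channel, which is immediate from the definitions in Section~\ref{sec:setting}. Since Theorem~\ref{thm:qAlice} is stated for arbitrary pairs of qcc-wiretap channels, no extra structural assumption on $\cW_{YZ|A}^{\otimes(n-1)}$ is needed for the induction to go through.
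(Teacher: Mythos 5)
Your proposal is correct and matches the paper's argument, which is exactly the inductive application of Theorem~\ref{thm:qAlice} to the regularized formula \eqref{eqn:priv_capacity}. Your explicit check that $\cW_{YZ|A}^{\otimes(n-1)}$ remains a qcc-wiretap channel (so the theorem's hypothesis is preserved through the induction) is the only detail the paper leaves implicit, and you handle it correctly.
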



\section{Quantum legitimate receiver Bob}\label{sec:quantum-bob}

Now let us turn to the question of classical inputs. 
By means of an explicit counterexample, here we show that the private information is non-additive for wiretap channels $\cW_{BZ|X}$ with classical input system $X$ and classical adversary $Z$, but quantum legitimate receiver $B$. 
Hence, somewhat surprisingly, classical adversaries already make classical-quantum channel coding amenable to non-additivity effects in the wiretap model. Note that this is not the case for classical communication over classical-quantum channels, whose capacity has the same single letter maximization of mutual information as the capacity for the purely classical channel.

\begin{theorem}\label{thm:qBob}
There exists a cqc-wiretap channel $\cW_{BZ|X}$ with $\p\left(\cW_{BZ|X}\right)>\pinfo\left(\cW_{BZ|X}\right)$.
\end{theorem}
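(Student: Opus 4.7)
The plan is to exhibit a single cqc-wiretap channel $\cW=\cW_{BZ|X}$ together with a joint input ensemble on two uses whose private rate strictly exceeds twice the single-letter quantity. Because the private capacity is given by the regularization \eqref{eqn:priv_capacity}, the chain $\p(\cW)\ge \tfrac{1}{2}\pinfo(\cW\otimes\cW) > \pinfo(\cW)$ would then conclude the theorem. So the entire problem reduces to constructing one channel $\cW$ and one bipartite auxiliary $V\to X_1X_2$ that beats the best single-letter ensemble by a computable margin.

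For the channel I would build $\cW$ out of the primitives introduced in Section~\ref{sec:setting}. A natural candidate is a flag-switched (direct-sum) channel in which one branch sends the data bit to Bob through the binary pure-state channel $\pure(f)$ while simultaneously leaking a noisy (erasure or BSC) copy to Charlie, and the other branch behaves like a shared-key primitive that is opaque to Charlie but uninformative to Bob on its own. The point of this design is to enforce a single-shot trade-off\,---\,any one-shot ensemble either lets Charlie learn essentially the whole input (killing $I(V:B)-I(V:Z)$) or fails to activate Bob's quantum distinguishability advantage\,---\,while on two uses one branch can be used as a randomizer for the other, so that the auxiliary $V$ masks the leakage to classical Charlie but Bob's coherent processing of the pure-state outputs recovers the information.

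For the single-letter upper bound I would exploit that, since the input is classical, Lemma~\ref{lem:card_bound} restricts us to $|V|\le |X|$, making $\pinfo(\cW)$ a finite-dimensional and tractable optimization. Depending on parameters I would either verify that $f_{\cW}\ge 0$ (\emph{more-capable}) so that $\pinfo(\cW)=\pnr(\cW)=\max_{p_X} f_{\cW}(p_X)$ via \eqref{eq:more-capable}, or otherwise bound $\pinfo(\cW)$ by the max--min difference of \eqref{eq:upper_bound} and then tighten it using the explicit structure of the two branches. In fortunate parameter regimes one can pin down $\pinfo(\cW)$ in closed form via a symmetry reduction on the flag bit and by arguing that the optimal $V$ is just $V=X$.

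For the two-letter lower bound I would pick an explicit correlated ensemble in which one coordinate plays the role of a one-time pad for the data transmitted in the other coordinate; computing $I(V:B_1B_2)-I(V:Z_1Z_2)$ for this ensemble reduces to evaluating a small number of Holevo quantities for pure-state ensembles on Bob's side and straightforward classical mutual informations on Charlie's side, and tuning the fidelity $f$ together with the erasure/crossover parameter should open a strict gap over $2\pinfo(\cW)$. The main obstacle is precisely the single-letter optimality proof: ruling out, over the full non-convex landscape of classical ensembles $\{p_X^v\}$ with $|V|\le|X|$, every single-shot strategy that might mimic the two-letter coding gain. I expect this step to require either a careful convex-analytic reduction (showing the maximum is attained at extremal flag-deterministic ensembles) or a numerical-plus-certified-bounding argument, whereas the gap-producing two-letter construction itself is the easier ``engineering'' half of the proof.
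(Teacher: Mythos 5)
Your overall skeleton is right ($\p(\cW)\ge\tfrac12\pinfo(\cW^{\otimes 2})>\pinfo(\cW)$ via \eqref{eqn:priv_capacity}, a correlated two-letter ensemble, and the cardinality bound of Lemma~\ref{lem:card_bound} on the single-letter side), and your ``one coordinate acts as a one-time pad for the other'' ensemble is essentially the parity pre-processing \eqref{eq:parityprep} that the paper uses. But you have left unresolved exactly the step you yourself identify as the main obstacle, and the idea that closes it is missing from your plan. The paper does \emph{not} try to evaluate $\pinfo(\cW)$ at a positive value and then beat it; it chooses the channel ($\pure(r)$ to Bob, $\bec((1-r)^2)$ to Charlie, both fed the same classical bit) so that for $r\ge\hat r\approx0.5424$ the function $f_{\cW}$ of \eqref{eq:p1asf} is \emph{convex} in the input distribution (certified by an explicit second-derivative computation). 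Convexity gives $f_{\cW}(\mathbb E[\rho])-\mathbb E[f_{\cW}(\rho)]\le0$ for every ensemble, i.e.\ the channel is anti-less-noisy and $\pinfo(\cW)=0$ exactly\,---\,no search over the non-convex landscape of ensembles is needed. Your two suggested routes for the single-letter bound point the wrong way: establishing $f_{\cW}\ge0$ (more-capable) would give $\pinfo=\pnr$ via \eqref{eq:more-capable}, typically a \emph{positive} number that you would then have to beat, and the max--min bound \eqref{eq:upper_bound} is only an upper bound, generically loose rather than an exact evaluation.

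Two further concrete issues. First, your flag-switched/direct-sum channel is unnecessary and unanalyzed; the paper's counterexample has no branch structure at all, and without a specific candidate whose $f_{\cW}$ you can differentiate twice you cannot certify $\pinfo=0$. Second, the two-letter side needs more than the parity ensemble with deterministic $V=X$: near the threshold $\hat r$ the channel is also anti-more-capable, so the paper additionally applies \emph{noisy} pre-processing (a $\bsc(q)$ from $V$ to the logical bit) before the parity encoding; only then does $\tfrac12\pinfo(\cW^{\otimes2}\circ\cP)$ stay positive (e.g.\ $r=0.543$, $q\approx0.2281$ gives rate $\approx0.0003$). So the ``engineering half'' is not quite as easy as you expect either. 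As written, the proposal is a reasonable research plan but not yet a proof: the decisive mechanism\,---\,pick the channel so that $f_{\cW}$ is provably convex, making the single-letter quantity vanish\,---\,is absent.
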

In fact, we can construct a whole family of counterexamples, based on the parameterized wiretap channel $\cW_{BZ|X}[r]$, where $\cW_{B|X}[r]=\pure(r)$ and $\cW_{Z|X}[r]=\bec((1-r)^2)$, as depicted in Fig.~\ref{fig:Q_Bob_chann}. 
We construct a simple block preprocessing of a single bit input to two channel inputs using a parity code to show that the private information of two uses of $\cW_{BZ|X}[r]$ is positive for some values of $r$ for which the private information of a single use is zero.

\begin{figure}[t]
    \centering
    \begin{tikzpicture}
    \tikzstyle{box}=[draw,minimum width=25mm,minimum height=10mm]
    \node[box] (bec) at (0,-.8) {$\text{BEC}((1{-}r)^2)$};
    \node[box] (bpc) at (0,.8) {$\text{BPC}(r)$};
    \node (y) at (3,.8) {$B$};
    \node (c) at (3,-.8) {$Z$};
    \draw (bpc) -- (y);
    \draw (bec) -- (c);
    \node[inner sep=0,outer sep=0] (ctrl) at (-2,0) {};
    \node (x) at (-4,0) {$X$};
    \draw (x) -- (-2,0);
    \draw (ctrl) |- (bec);
    \draw (-2,0) |- (bpc);
    \fill (ctrl) circle (.4mm);
    \node[fit=(ctrl) (bpc) (bec), draw,dashed,inner sep=10pt] (W) {};
    \node[below] at (W.south) {$\cW_{BZ|X}[r]$};
    \end{tikzpicture}
    \caption{The channel $\cW_{BZ|X}[r]$, composed of $\pure(r)$ to Bob and $\bec((1-r)^2)$ to Charlie.}
    \label{fig:Q_Bob_chann}
\end{figure}

\begin{proof}
First, we show that the private information $\pinfo{\left(\cW_{BZ|X}[r]\right)}$ vanishes for channel parameter values $r\geq \hat r$, where $\hat r\approx 0.5424$ satisfies 
\begin{align}\label{eq:quantumBobthreshold}
\frac{2\hat r^2(\hat r -2)}{\hat r^2-1}=\log\left(\frac{1+\hat r }{1-\hat r }\right)\,.
\end{align}
Vanishing private information is the statement that the channel is anti-less-noisy, which is the case when $f_{\cW[r]}:\rho_X\to I(X':B)_\omega-I(X':Z)_\omega$ from \eqref{eq:p1asf} is convex. 
Here $\omega_{X'BZ}=\sum_{x}p_{X}(x)\ketbra{x}_{X'}\otimes\cW_{BZ|X}[r](\ketbra{x}_X)$.
As $\cW_{BZ|X}[r]$ has binary input, the cardinality bound $|V|\leq2$ from Lemma \ref{lem:card_bound} considerably simplifies the analysis. 
In Appendix \ref{app:A} we calculate the second derivative of $f_{\cW[r]}$ and find that it becomes positive for $r\geq \hat r$. 

To see that the private capacity is not zero for some $r\geq \hat r$, consider the preprocessing map $\mathcal P$ based on the $n=2$ parity encoding, which maps 
\begin{subequations}
\label{eq:parityprep}
\begin{align}
&\dyad{0}{0}_{X}\longrightarrow \frac{\dyad{00}{00}_{X^2}+\dyad{11}{11}_{X^2}}{2}\quad\text{and}\\
&\dyad{1}{1}_{X}\longrightarrow \frac{\dyad{01}{01}_{X^2}+\dyad{10}{10}_{X^2}}{2}\,.    
\end{align}
\end{subequations}

Due to symmetry, the rate $\frac12\pinfo(\cW_{BZ|X}[r]^{\otimes 2}\circ \mathcal P_{X^2|X})$ of the combined preprocessing and wiretap channel achieves the upper bound \eqref{eq:upper_bound}.
The maximal input to $f$ is the uniform distribution, and the minimal inputs occur symmetrically at distributions $(1-q,q)$ and $(q,1-q)$, for the appropriate value of $q$.
This implies that the optimal noisy preprocessing has a uniform $V$ and $P_{X|V}$ given by BSC$(q)$. 
As a concrete example, taking $r=0.543$ and $q\approx 0.2281$ gives a rate of roughly $0.0003$. 
This is a miniscule value, but positive. 
\end{proof}

\begin{figure}[ht]
\begin{center}
\begin{tikzpicture}

\begin{axis}[
name=mainplot,
grid,
enlargelimits = false,
width=\textwidth/2,
xmin=.52,
xmax=.55,
ymin=0,
ymax=0.03,
xlabel={\small channel parameter $r$},
ylabel={\small achievable rate},
ylabel near ticks,
ticklabel style = {font=\scriptsize},
legend style={font=\scriptsize,},
legend pos=north west,
legend cell align=left,
xtick={.52,.525,.53,.535,.54,.545,.55},
xticklabels={,0.525,0.53,0.535,0.54,0.545,0.55},
title={$\pure(r)/\bec((1-r)^2)$},
]
\addplot[thick] table {QuantumBobPrivateInfo.dat};
\addplot[thick,dashed] table {QuantumBobParityPreprocessing.dat};

\legend{$n=1$,$n=2$}

\end{axis}

\begin{axis}[
    at={(mainplot.north east)},
    xshift=-10pt,
    yshift=-10pt,
    anchor=north east,
    xmin=.54,
    xmax=.546,
    ymin=0,
    ymax=0.002,
    tiny,
    width=\textwidth/4,
    xtick={.54,.541,.542,.543,.544,.545,.546},
    xticklabels={0.54,,0.542,,0.544,,0.546},
    axis background/.style={fill=white},
]
\addplot[thick] table {QuantumBobPrivateInfo.dat};
\addplot[thick,dashed] table {QuantumBobParityPreprocessing.dat};
\fill (axis cs:.543,.0003) circle  (.4mm);

\end{axis}

\end{tikzpicture}
\end{center}
\caption{Plots of $\pinfo{(\cW_{BZ|X}[r])}$ and $\frac12\pinfo(\cW_{BZ|X}[r]^{\otimes 2}\circ \cP_{X^2|X})$ for coding over the $\cW_{BZ|X}[r]=\pure(r)/\bec((1-r)^2)$ channel. Here, $\cP_{X^2|X}$ is the parity-based pre-processing scheme defined in \eqref{eq:parityprep}.  The inset plot shows that $\pinfo{(\cW_{BZ|X}[r])}$ vanishes for a threshold channel parameter value of $r\geq \hat r\approx 0.5424$, while the rate for two channel uses remains positive for $r$ up to roughly $0.545$, thus demonstrating non-additivity for the case of quantum Bob. The dot corresponds to the specific example in the proof of Theorem~\ref{thm:qBob}.}
\label{fig:example-Bob}
\end{figure}

It is interesting to note that noisy pre-processing is necessary for $r$ near $\hat r$, as it can also be shown that the channel is anti-more-capable for $r$ larger than roughly $0.5342$ (specifically, the solution of $h_2(\tfrac{1-r}{2})=r(2-r)$, see Appendix \ref{app:anticapable} for more details). 
On the other hand, neither pre-processing nor regularization are necessary to evaluate the capacity for $r$ less or equal than $\tilde{r}:=\frac{3-\sqrt{5}}{2}\approx 0.3820$, for in this parameter region the channel is degradable. 
Degradability can in principle be determined by searching for a quantum channel which transforms Bob's pure state outputs to Charlie's BEC outputs pointwise, i.e.\ an $\cE_{Z|B}$ such that $\cW_{Z|X}[r](\ketbra x)=\cE_{Z|B}\circ \cW_{B|X}[r](\ketbra x)$ for $x=0,1$. This problem can be cast as a semidefinite program~\cite{heinosaari_extending_2012} in the general case, but here we may appeal to the simpler necessary and sufficient conditions on the existence of such a channel given in Theorem 6 of \cite{heinosaari_extending_2012}. Indeed, since the outputs of $\cW_{B|X}[r]$ are pure, such a channel exists if and only if the fidelity between Charlie's outputs is not smaller than Bob's outputs. The fidelity of Bob's outputs is $r$, while for Charlie the fidelity is just the probability of erasure, $(1-r)^2$. Equating these two gives the threshold value $\tilde r=\tfrac{3-\sqrt 5}2$.
We note that the particular degrading maps is given by the unambiguous state discrimination measurement~\cite{ivanovic_how_1987,dieks_overlap_1988,peres_how_1988}, whose failure probability is precisely the fidelity between the two pure states. 

Fig.~\ref{fig:example-Bob} shows the regularized private information $\frac12\pinfo(\cW_{BZ|X}[r]^{\otimes 2}\circ \mathcal P_{X^2|X})$ of this scheme, versus the private information $\pinfo(\cW_{BZ|X}[r])$ of the bare wiretap channel. As in the proof, by symmetry the rates in each case achieve \eqref{eq:upper_bound}, with the optimal maximal input to the respctive $f$ the uniform distribution. In the language of \cite{ozel_wiretap_2013}, both channels are ``dominantly cyclic shift symmetric''.


\section{Quantum adversarial receiver Charlie}\label{sec:quantum-eve}
Finally, we consider the case that the output to the adversarial receiver Charlie is quantum. 
Again by means of a counterexample, we establish that the private information is generally non-additive for ccq wiretap channels $\cW_{YC|X}$ with classical input system $X$ and classical legitimate receiver $Y$, but quantum adversary $C$. Hence, in the wiretap model, quantum adversaries are strictly different from classical adversaries in the sense that non-additivity effects become possible.


\begin{theorem}\label{thm:qEve}
There exists a ccq wiretap channel $\cW_{YC|X}$ with $\p\left(\cW_{YC|X}\right)>\pinfo\left(\cW_{YC|X}\right)$.
\end{theorem}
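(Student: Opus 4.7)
The plan is to establish Theorem~\ref{thm:qEve} in close analogy with the quantum-Bob construction of Theorem~\ref{thm:qBob}, essentially by swapping the roles of the two receivers. Concretely, I would exhibit a one-parameter family of ccq wiretap channels $\cW_{YC|X}[r]$ in which Bob's classical channel is a binary erasure channel $\bec(q(r))$ and Charlie's classical-quantum channel is a binary pure-state channel $\pure(f(r))$, with the functional dependence of $q$ and $f$ on $r$ chosen so that Charlie's pure-state outputs are distinguishable enough to dominate Bob's classical observation for a range of $r$, yet whose advantage is broken by a parity-type block preprocessing.

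First, using the cardinality bound from Lemma~\ref{lem:card_bound}, which for binary classical input gives $|V|\le 2$, I would reduce the single-letter optimization to a one-parameter one over the input distribution and compute $f_{\cW[r]}(\rho_X)=I(X':Y)_\omega-I(X':C)_\omega$ in closed form. The key calculation is then to verify that the second derivative of $f_{\cW[r]}$ is nonnegative on the whole probability simplex in a suitable interval of $r$; this would show that the channel is anti-less-noisy, hence $\pinfo(\cW_{YC|X}[r])=0$ on that interval, paralleling the threshold analysis in Appendix~\ref{app:A} used for the quantum-Bob case.

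Next, applying the same parity preprocessing $\cP_{X^2|X}$ defined in \eqref{eq:parityprep} and a subsequent noisy BSC$(q)$ pre-processing on the source bit $V\to X$, I would evaluate $\tfrac12\,\pinfo(\cW_{YC|X}[r]^{\otimes 2}\circ \cP_{X^2|X})$ directly. By the same dominantly-cyclic-shift symmetry appealed to in the quantum-Bob example, the optimal ``large'' input to the auxiliary $f$ is uniform and the optimal ``small'' inputs come as a symmetric pair $(1-q,q)$ and $(q,1-q)$, so the optimization collapses to a single real parameter. For at least one specific value of $(r,q)$ inside the anti-less-noisy interval, numerical evaluation should give a strictly positive rate; combined with \eqref{eqn:priv_capacity}, this yields $\p(\cW_{YC|X}[r])>0=\pinfo(\cW_{YC|X}[r])$, establishing the theorem.

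The main obstacle, relative to the quantum-Bob case, is that Charlie's outputs are now genuinely quantum rather than classical, so the entropies entering $I(X':C)$ after parity preprocessing are no longer expressible via bit statistics. Each of Charlie's signal states is a rank-$\le 2$ mixture of tensor products of two pure states, living effectively in the four-dimensional subspace spanned by $\ket{\psi\psi},\ket{\psi\phi},\ket{\phi\psi},\ket{\phi\phi}$; diagonalising these $4\times 4$ density matrices and controlling the resulting Holevo quantity as a function of $r$ and $q$ is the main technical step. Because the parity code concentrates Charlie's conditional ensembles into states with fidelity strictly larger than $f(r)$, one should get a genuine drop in Charlie's Holevo information, while Bob's erasure channel still allows learning the parity whenever no erasure occurs, making strictly positive single-letter rates achievable by the block code even where $\pinfo$ itself vanishes.
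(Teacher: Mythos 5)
There is a genuine gap in the achievability half of your argument, and it lies exactly where the paper says it must: in the choice of block pre-processing. You propose to reuse the $n=2$ parity code $\cP_{X^2|X}$ from the quantum-Bob construction, merely swapping which receiver is quantum. But the paper's proof of Theorem~\ref{thm:qEve} does \emph{not} use parity coding; it uses the repetition-code-plus-added-noise pre-processing $\cP_{X^n|X}[q]$ of Smith and Smolin, applied to the channel $\bsc(p)/\pure(1-2p)$, with the positive rate only verified numerically for large blocks ($n=400$, $q=0.32$, giving positivity up to $p\approx 0.129$ against the single-letter threshold $\hat p\approx 0.1241$). The Discussion section states explicitly that parity encoding does \emph{not} lead to non-additivity for the quantum-Charlie example (just as repetition coding fails for the quantum-Bob example), so the direct ``swap the receivers, keep the code'' analogy breaks down. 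Your heuristic for why parity should hurt Charlie is also backwards: under the parity encoding, Charlie's conditional ensemble for logical $0$ is the uniform mixture of $\proj{\psi\psi}$ and $\proj{\phi\phi}$, whose mutual overlap is $f^2\le f$ --- the intra-coset states become \emph{more} distinguishable, not less, so there is no degeneracy effect suppressing Charlie's Holevo information. The whole point of the repetition code with sender-added bit-flip noise is the opposite: it forces many distinct error patterns onto nearly indistinguishable output states for Charlie while Bob can still recover the logical bit from the syndrome, and this is what Appendix~\ref{app:C} computes.

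Two smaller issues. First, you leave the decisive step --- exhibiting a concrete $(r,q)$ inside the anti-less-noisy interval with a strictly positive two-letter rate --- entirely to an unperformed numerical check; given that the natural analogue is known to fail, this cannot be waved through. Second, your channel family ($\bec$ to Bob, $\pure$ to Charlie) differs from the paper's ($\bsc(p)$ to Bob, $\pure(1-2p)$ to Charlie, which is the Smith--Smolin Pauli-channel example restricted to standard-basis inputs); nothing rules out your family a priori, but you would then owe both the convexity/threshold computation for $f_{\cW[r]}$ (the analogue of Appendix~\ref{app:B}) and a working code, neither of which is supplied. The single-letter part of your plan (cardinality bound $|V|\le 2$, convexity of $f_{\cW}$ implying the channel is anti-less-noisy, hence $\pinfo=0$) does mirror the paper correctly; it is the superadditivity witness that is missing.
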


\begin{figure}[t]
    \centering
    \begin{tikzpicture}
    \tikzstyle{box}=[draw,minimum width=25mm,minimum height=10mm]
    \node[box] (bpc) at (0,-.8) {$\text{BPC}(1{-}2p)$};
    \node[box] (bsc) at (0,.8) {$\text{BSC}(p)$};
    \node (y) at (3,.8) {$Y$};
    \node (c) at (3,-.8) {$C$};
    \draw (bsc) -- (y);
    \draw (bpc) -- (c);
    \node[inner sep=0,outer sep=0] (ctrl) at (-2,0) {};
    \node (x) at (-4,0) {$X$};
    \draw (x) -- (-2,0);
    \draw (ctrl) |- (bpc);
    \draw (-2,0) |- (bsc);
    \fill (ctrl) circle (.4mm);
    \node[fit=(ctrl) (bsc) (bpc), draw,dashed,inner sep=10pt] (W) {};
    \node[below] at (W.south) {$\cW_{YC|X}[p]$};
    \end{tikzpicture}
    \caption{The channel $\cW_{YC|X}[p]$, composed of $\bsc(p)$ to Bob and $\pure(1-2p)$ to Charlie.}
    \label{fig:QEve_chann}
\end{figure}

Again we can construct a family of counterexamples, this time based on the wiretap channel $\cW_{YC|X}[p]$ composed of $\bsc(p)$ to Bob and $\pure(1-2p)$ to Charlie, as depicted in Fig.~\ref{fig:QEve_chann}. 
We borrow the block pre-processing based on the repetition code from \cite{smith_structured_2008} to find channels having positive private information under the block pre-processing, but zero for a single use, as in the case of quantum Bob. 
In fact, this example is implicit in \cite{smith_structured_2008}, as $\cW_{YC|X}$ is the result of restricting the quantum channel considered there (the Pauli channel with independent bit and phase errors at identical rates) to standard basis input states.

\begin{proof}
We first show that the private information $\pinfo{\left(\cW_{YC|X}[p]\right)}$ vanishes for $p\geq \hat p$, where $\hat p\approx 0.1241$ satisfies 
\begin{align}\label{eq:quantumEvethreshold}
\frac{(1-2\hat p)^3}{2\hat p(1-\hat p)}=\text{ln}\left(\frac{1-\hat p}{\hat p}\right)\,.
\end{align}
As before, this is established by showing that $f_{\cW[p]}$ is convex for $p\geq \hat p$ by direct calculation of the second derivative. The details of the calculation are given in Appendix \ref{app:B}.

To obtain lower bounds on the private capacity $\p(\cW_{YC|X}[p])$, we employ the block pre-processing from \cite{smith_structured_2008}, adapted to the channel setting.  In particular, consider the pre-processing $\cP_{X^n|X}[q]$ resulting from $n$-bit repetition encoding followed by i.i.d.\ bit-flip noise addition at rate $q$ by the sender. 
Denoting the input to the pre-processing by $X$ and bounding $\pnr(\cW_{YC|X}[p]^{\otimes n}\circ \cP_{X^n|X}[q])$ from \eqref{eq:more-capable} by choosing a uniform $X$, we find via explicit calculation in Appendix \ref{app:C}:
\begin{align}
\p(\cW_{YC|X}[p])&\geq \frac{1}{n}\Big[I(X':Y^{n})_{\omega}-I(X':C^{n})_{\omega}\Big]\\
&=\frac{1}{n}\Big[1-\sum_{s_2^n}p\left(s_2^n\right)H(W|S_2^n=s_2^n)-H\left(\tfrac{1}{2}\rho_{p,q}^{\otimes n}+\tfrac{1}{2}Z^{\otimes n}\rho_{p,q}^{\otimes n}Z^{\otimes n}\right)+nH(\rho_{p,q})\Big]\,.
\end{align} 
Here, $S_2^n \in\{0,1\}^{n-1}$ denotes the syndrome of the repetition code as obtained by the legitimate receiver, $W$ is the value of the logical bit error, and $\rho_{p,q}$ is the output state of $\pure(1-2p)$ for a zero-valued bit in the code-block that has undergone the pre-processing bit-flip channel. This is precisely Eq.~(2) in \cite{smith_structured_2008}. As numerically evaluated therein, the expression remains positive at least up to the threshold $\bar{p}\approx0.129$, which is obtained from $n=400$ and $q=0.32$. 
\end{proof}

\begin{figure}[ht]
\begin{center}
\begin{tikzpicture}

\begin{axis}[
name=mainplot,
grid,
enlargelimits = false,
width=\textwidth/2,
xmin=.1201,
xmax=.125,
ymin=0,
ymax=0.003,
xlabel={\small channel parameter $p$},
ylabel={\small achievable rate},
ylabel near ticks,
ticklabel style = {font=\scriptsize},
legend style={font=\scriptsize,},
legend pos=north west,
legend cell align=left,
xtick={.121,.122,.123,.124,.125},
xticklabels={0.121,0.122,0.123,0.124,0.125},
title={$\bsc(p)/\pure(1-2p)$}
]
\addplot[thick] table {QuantumEvePrivateInfo.dat};
\addplot[thick,dashed] table[x index=1, y index=3] {QuantumEveRepNoiseProcessing.dat};

\legend{$n=1$,$n=3$}

\end{axis}

\begin{axis}[
    at={(mainplot.north east)},
    xshift=-10pt,
    yshift=-10pt,
    anchor=north east,
    xmin=.12375,
    xmax=.12475,
    ymin=0,
    ymax=0.000125,
    tiny,
    width=\textwidth/4,
    xtick={.124,.1245},
    xticklabels={0.124,0.1245},
    axis background/.style={fill=white},
]
\addplot[thick] table {QuantumEvePrivateInfo.dat};
\addplot[thick,dashed] table[x index=1, y index=3] {QuantumEveRepNoiseProcessing.dat};

\end{axis}

\end{tikzpicture}
\end{center}
\caption{Plots of $\pinfo{(\cW_{YC|X}[p])}$ and $\max_{q}\tfrac13 \pinfo(\cW_{YC|X}[p]^{\otimes 3}\circ \cP_{X^3|X}[q])$ for coding over the $\cW_{YC|X}[p]=\bsc(p)/\pure(1-2p)$ channel. We refer to the main text for the definition of the pre-processing map $\cP_{X^3|X}[q]$. The inset plot shows that the private information is zero beyond $p=\hat p\approx 0.1241$, while the achievable rate for three uses of the channel remains positive up to $p\approx 0.1245$. This demonstrates non-additivity for the case of quantum Eve.}
\label{fig:example-Eve}
\end{figure}

In the context of key distillation, the threshold value $\hat p$ was found by numerical optimization in \cite[Footnote 21]{smith_structured_2008}. 
Furthermore, the wiretap channel $\cW_{YC|X}[p]$ becomes anti-degradable for $p$ larger or equal to
$\tilde{p}:=\frac{2-\sqrt2}{4}\approx0.1464$, 
at which point we have $\p(\cW_{YC|X}[\tilde{p}])=0$.
Again we make use of \cite[Theorem 6]{heinosaari_extending_2012} to establish anti-degradability. 
Here Charlie's states are pure, with a fidelity $1-2p$, while Bob's states have a fidelity of $2\sqrt{p(1-p)}$. Equating these two gives the threshold $\tilde p=\tfrac{2-\sqrt 2}4$. 
The degrading map in this case is simply the Helstrom measurement to distinguish the pure states~\cite{helstrom_quantum_1976}.
This value is identical to thresholds found in the upper bounds for key distillation~\cite{fuchs_optimal_1997,kraus_lower_2005,moroder_one-way_2006} or private communication over the corresponding quantum channel with independent bit and phase flip errors~\cite{smith_private_2008,smith_additive_2008}. 
Note that it remains an open question if the private capacity is non-zero all the way up to the degradability threshold $\tilde{p}$.

In Fig.~\ref{fig:example-Eve} we provide a comparison plot between $\pinfo{(\cW_{YC|X}[p])}$ and $\max_{q}\tfrac13 \pinfo(\cW_{YC|X}[p]^{\otimes 3}\circ \cP_{X^3|X}[q])$ for the $n=3$ pre-processing scheme. We find a threshold of $0.1245$, and perhaps coincidentally the optimal $q$ also appears to be $0.32$ in this case.





\section{Discussion}\label{sec:conclusion}

We determined for which cases the private information of hybrid classical-quantum wiretap channels is non-additive. We found additivity violations when either of the two receivers becomes quantum; interestingly without any entanglement being present. On the other hand, we also showed that for quantum inputs but classical receivers the private information remains additive. That is, entangled input states are of no help. We note that the setting in \cite{smith_structured_2008} is already an instance of non-additivity without entangled inputs, because the combination of repetition coding and standard basis inputs produces separable states. This is precisely what we use in Section \ref{sec:quantum-eve}.  Moreover, we can regard the parity encoding in Section \ref{sec:quantum-bob} as a phase error-detecting code with stabilizer $XX$, since this operator also stabilizes the outputs of \eqref{eq:parityprep}. However, the link between private and quantum coding does not hold for more general preprocessing. For instance, \cite{fern_lower_2008} studies the effects of preprocessing using the five qubit code on various Pauli channels. But this cannot be interpreted as a classical preprocessing for a classical wiretap channel, since in the five qubit code the relative phases of the codewords play a decisive role, yet they disappear in any classical encoding. 

Similar to general additivity questions in quantum Shannon theory, it remains open to quantify the magnitude of how non-additive the private information can become. The results here may shed some light on the role of degenerate codes in non-additivity. Interestingly, pre-processing based on repetition coding does not lead to non-additivity for the quantum Bob example,  nor does parity encoding lead to non-additivity for the quantum Charlie example. Is this a general trend or just a coincidence? Another interesting question to resolve is if more capable and less noisy wiretap channels have an additive private information\,---\,as they do in the classical case. More broadly, we might ask how far we can push the question about the quantumness needed to witness non-additivity phenomena in Shannon information theory. A natural candidate that remains open is to resolve if Marton's inner bound for general broadcast channels \cite{marton_coding_1979} is additive or not \cite{anantharam_evaluation_2019}. Insights from quantum information theory as presented here might be able to shine some light on this long-standing question.



\paragraph*{Acknowledgements} We thank Andreas Winter for discussions related to the topic of this paper. JMR was supported by the Swiss National Science Foundation (SNSF) via the National Center of Competence in Research QSIT, as well as the Air Force Office of Scientific Research (AFOSR) via grant FA9550-19-1-0202. 

\printbibliography[heading=bibintoc,title=References]

\appendix

\section{Private information in the case of quantum Bob}
\label{app:A}
\subsection{Anti-less-noisy}
\noindent Here we calculate the second derivative of $f_{\cW[r]}$. 
More precisely, let $\omega_{X'BZ}=\sum_{x}p_{X}(x)\dyad{x}{x}_{X'}\otimes \cW_{BZ|X}[r]\left(\dyad{x}{x}_{X}\right)$ and $\cW_{BZ|X}[r]:=\pure(r)/\bec{\left((1-r)^2\right)}$ with $q=p_X(0)$. 
Then $f_{\cW[r]}(\rho_X)=I(X':B)_\omega-I(X':E)_\omega$ is a function of $q$, and we are interested in $\frac{\partial^2}{{\partial q}^2}\left[f_{\cW[r]}(\rho_X)\right]$.
Since the output to Bob is a pure state, in this case we have 
\begin{align}
\label{appA:f}
    I(X':B)_{\omega}-I(X':Z)_{\omega}= H(\omega_{B})-H(\omega_{Z})+h_{2}\left((1-r)^2\right)\,.
\end{align}
Thus, 
we have
$    \frac{\partial^2}{{\partial q}^2}\left[f_{\cW[r]}(\rho_X)\right]
    = \frac{\partial^2}{{\partial q}^2}\left[ H(\omega_{B}) \right]
    -\frac{\partial^2}{{\partial q}^2}\left[ H(\omega_{Z})\right]
    $.
To evaluate this expression, we will make use of the following lemma.

\begin{lemma} 
[Second derivative of Shannon entropy] 
\label{lem:entropy}
\begin{align}
    \frac{\partial^2}{{\partial q}^{2}}\left[\sum_{i} -\lambda_{i}(q)\log(\lambda_{i}(q))\right]=-\sum_{i}\left[\frac{\partial^2\lambda_{i}(q)}{{\partial q}^2}\left[\frac{1}{\ln(2)}+\log(\lambda_{i}(q))\right]+\frac{1}{\ln(2)}\frac{1}{\lambda_{i}(q)}\left(\frac{\partial\lambda_{i}(q)}{\partial q}\right)^2\right]\,.
\end{align}
\end{lemma}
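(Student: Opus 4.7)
The statement is a purely calculational identity, so my plan is to just differentiate twice term by term and collect.

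First I would compute the first derivative of a single summand. Using the product rule together with $\tfrac{d}{dx}\log x = \tfrac{1}{x\ln 2}$, I obtain
\begin{align}
\frac{\partial}{\partial q}\bigl[-\lambda_i(q)\log\lambda_i(q)\bigr]
= -\frac{\partial\lambda_i(q)}{\partial q}\log\lambda_i(q) - \lambda_i(q)\cdot\frac{1}{\lambda_i(q)\ln 2}\frac{\partial\lambda_i(q)}{\partial q}
= -\frac{\partial\lambda_i(q)}{\partial q}\left[\log\lambda_i(q) + \frac{1}{\ln 2}\right].
\end{align}
Notice how the factor $\lambda_i$ in front of $\log\lambda_i$ cancels the $\lambda_i$ in the denominator of the derivative of the logarithm, leaving a clean form with no explicit $1/\lambda_i$ factor at this stage.

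Next I would differentiate this expression once more with respect to $q$. The term $\log\lambda_i(q) + 1/\ln 2$ gets hit by $\partial/\partial q$ in one piece (giving $\tfrac{1}{\lambda_i \ln 2}\,\partial_q\lambda_i$), while the prefactor $\partial_q\lambda_i$ contributes its own derivative $\partial_q^2\lambda_i$ in the other piece. Expanding,
\begin{align}
\frac{\partial^2}{\partial q^2}\bigl[-\lambda_i(q)\log\lambda_i(q)\bigr]
= -\frac{\partial^2\lambda_i(q)}{\partial q^2}\left[\log\lambda_i(q)+\frac{1}{\ln 2}\right]
- \frac{1}{\ln 2}\,\frac{1}{\lambda_i(q)}\left(\frac{\partial\lambda_i(q)}{\partial q}\right)^{\!2}.
\end{align}
Summing over $i$ yields exactly the claimed identity.

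There is no real obstacle here; the only thing to be a bit careful about is the sign bookkeeping and the factor of $1/\ln 2$ coming from differentiating the base-$2$ logarithm. One should also implicitly assume that the $\lambda_i(q)$ are smooth and stay strictly positive on the interval of interest, so that term-by-term differentiation is legitimate and the $1/\lambda_i$ that appears is finite; for the application in Appendix~\ref{app:A} this will hold on the open parameter interval where one evaluates convexity, with boundary values handled by continuity.
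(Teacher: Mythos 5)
Your computation is correct and is exactly the routine double differentiation that the paper leaves implicit (Lemma~\ref{lem:entropy} is stated without proof); the sign and $1/\ln 2$ bookkeeping both check out, and your remark about requiring $\lambda_i(q)>0$ is the right caveat. Nothing further is needed.
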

\noindent 
The channel to Charlie is just $\bec((1-r)^2)$, whose output $\omega_C$ is diagonal with probabilities $\lambda_0=q(1-(1-r)^2)$, $\lambda_1=(1-q)(1-(1-r)^2)$, and $(1-r)^2$. 
Using Lemma \ref{lem:entropy} we thus have
\begin{align}
    \frac{\partial^2}{{\partial q}^2}\left[H(\omega_{Z})\right]=- \sum_{i=0}^{1} \left[\frac{1}{\ln(2)}\frac{1}{\lambda_{i}} \left(\frac{\partial \lambda_{i}}{\partial q} \right)^{2}\right]
    = -\frac{(1-(1-r)^2)}{\ln(2)}\left[\frac{1}{q} +\frac{1}{(1-q)}
    \right]\,.
\end{align}

Meanwhile, the channel to Bob is $\pure(r)$, yielding the output state $\omega_B=q\ketbra{\psi^0}_B+(1-q)\ketbra{\psi^1}_B$, where $\ket{\psi^{0}}=\ket{0}$ and $\ket{\psi^{1}}=r \ket{0}+\sqrt{1-r^2}\ket{1}$.
Hence 
\begin{align}
\omega_{B}=\left[q+(1-q)r^2\right]\dyad{0}{0}_{B} + (1-q)(1-r^2)\dyad{1}{1}_{B} + (1-q)r\sqrt{1-r^2}\left[\dyad{0}{1}_{B} + \dyad{1}{0}_{B}\right]\,.
\end{align}
Via the characteristic polynomial, the eigenvalues are found to be $\lambda_{\pm}=\frac{1}{2}(1\pm \tilde{g}(r,q))$ with $\tilde{g}(r,q)= \sqrt{1-4(1-r^2)(q-q^2)}$.
It thus follows that 
\begin{align}
\frac{\partial}{{\partial q}}\lambda_{\pm}=\mp\frac{1}{2}\,\frac{2(1-r^2)(1-2q)}{\tilde{g}(r,q)}\quad\mathrm{and}\quad\frac{\partial^2}{{\partial q}^2} \lambda_{\pm}= \pm\frac{1}{2}\,\frac{4(1-r^2){\tilde{g}(r,q)}^2-(2(1-r^2)(1-2q))^2}{\tilde{g}(r,q)^3}\,.
\end{align}
Using Lemma \ref{lem:entropy}, we obtain 
\begin{align}\label{eq:qBob_derivative}
&\frac{\partial^2}{{\partial q}^2}\left[H(\omega_{B})\right]-\frac{\partial^2}{{\partial q}^2}\left[H(\omega_{Z})\right]\notag\\
&=-\left[  \frac{\partial^2 \lambda_+ }{{\partial q}^2}\left[\frac{1}{\ln(2)}+\log(\lambda_+)\right]+\frac{1}{\ln(2)}\frac{1}{\lambda_+}\left(\frac{\partial\lambda_{+} }{\partial q}\right)^2  \right]-\left[  \frac{\partial^2 \lambda_- }{{\partial q}^2}\left[\frac{1}{\ln(2)}+\log(\lambda_-)\right]+\frac{1}{\ln(2)}\frac{1}{\lambda_-}\left(\frac{\partial\lambda_{-} }{\partial q}\right)^2  \right]\notag\\
&\quad+ \frac{(1-(1-r)^2)}{\ln(2)}\left[\frac{1}{q} +\frac{1}{(1-q)}\right]\,.
\end{align}
We may then rewrite the second derivative in terms of $\tilde{g}(r,q)$ as
\begin{align}
\frac1{\tilde{g}(r,q)^2}\left(
4 (1 - r)^3 (1 + r) + \frac{2r-1}{(1 - q) q}-\frac{4 r^2 (1 - r^2) \tanh^{-1}(\tilde{g}(r,q))}{\tilde{g}(r,q)}\right)\,.
\label{eq:qbobderivsimp}
  \end{align}
This expression is symmetric in $q$ around $q=\frac{1}{2}$.
In order to find the threshold parameter $\hat{r}$, we set $q=1/2$, while setting \eqref{eq:qbobderivsimp} to zero. 
Rearranging yields \eqref{eq:quantumBobthreshold}. 
One can then numerically verify that the expression is positive away from $q=\frac{1}{2}$, thus proving convexity of $f_{\cW[r]}(\rho_X)$ for $r\geq \hat{r}$.\qed

\subsection{Anti-more-capable}
\label{app:anticapable}
The channel is anti-more-capable when $f_{\cW[r]}(\rho_X)\leq 0$, i.e.\ when the naive rate is zero. Again by symmetry in $p_X$, the uniform distribution is decisive. 
Setting $p_X=1/2$ gives $H(\omega_B)=h_2(\tfrac12(1-r))$, $H(\omega_Z)=-(1-r^2)\log (1-r^2)-(1-(1-r^2))\log \tfrac12(1-(1-r^2))$, and therefore the rate expression in \eqref{appA:f} is just $h_2(\tfrac12(1-r))-(2-r)r$. 

\section{Private information in the case of quantum Charlie}
\label{app:B}

Here we calculate the second derivative of $f_{\cW[p]}$. 
Let $\omega_{X'YC}=\sum_{x}p_{X}(x)\dyad{x}{x}_{X'}\otimes \cW_{YC|X}[p]\left(\dyad{x}{x}_{X}\right)$ with $\cW_{YC|X}[p]:=\bsc(p)/\pure(1-2p)$ and $p_X(0)=q$. 
Again $f_{\cW[p]}(\rho_X)=I(X':Y)_{\omega}-I(X':C)_{\omega}$ is a function of $q$, and we are interested in $\frac{\partial^2}{{\partial q}^2}\left[f_{\cW[p]}(\rho_X)\right]$.
Now the output to Charlie is pure, so we have 
\begin{align}
 I(X':Y)_{\omega}-I(X':C)_{\omega}= H(\omega_{Y})-H(\omega_{C})-h_{2}(p)
 \end{align}
and hence $
    \frac{\partial^2}{{\partial q}^2}\left[f_{\cW[p]}(\rho_X)\right]
    = \frac{\partial^2}{{\partial q}^2}\left[ H(\omega_{Y}) \right]
    -\frac{\partial^2}{{\partial q}^2}\left[ H(\omega_{C})\right]
    $.
The output to Bob is classical, with probabilities $\lambda_0=q(1-p)+(1-q)p$ and $\lambda_1=(1-q)(1-p)+qp$. 
Appealing to Lemma \ref{lem:entropy} gives
\begin{align}
\frac{\partial^2}{{\partial q}^2}\left[H(\omega_{Y})\right]=- \sum_{i=0}^{1} \left[\frac{1}{\ln(2)}\frac{1}{\lambda_{i}} \left(\frac{\partial \lambda_{i}}{\partial q} \right)^{2}\right]= -\frac{(1-2p)^2}{\ln(2)}\left[\frac{1}{q(1-p)+(1-q)p} +\frac{1}{qp+(1-q)(1-p)}\right]\,.
\end{align}

The channel to Charlie is $\pure(1-2p)$, yielding the output state
$\omega_{C}=\sum_{x} p_{X}(x)\dyad{\phi^{x}}{\phi^{x}}_{C}$,
where $\ket{\phi^x}=\sqrt{1-p}\ket{0}+(-1)^x\sqrt p \ket{1}$.
In terms of $q$, the state is  
\begin{align}
\omega_{C}=(1-p) \dyad{0}{0}_{C}+ p \dyad{1}{1}_{C}+ (2q-1)\sqrt{p(1-p)} [\dyad{0}{1}_{C} + \dyad{1}{0}_{C}]\,.
\end{align}
Via the characteristic polynomial, the eigenvalues are found to be $\lambda_{\pm}=\frac{1}{2}(1\pm g(p,q))$, with $g(p,q)=\sqrt{1-4p(1-p)[1-(2q-1)^2]}$.
It then follows that
$\frac{\partial}{{\partial q}}\lambda_{\pm}=\pm\frac{1}{2}\,\frac{8\,p(1-p)(2q-1)}{g(p,q)}$ and
\begin{align}
\frac{\partial^2}{{\partial q}^2} \lambda_{\pm}= \pm\frac{1}{2}\, \frac{16\,p(1-p)g(p,q)^2-(8p(1-p)(2q-1))^2}{g(p,q)^3}\,.
\end{align}
Using Lemma \ref{lem:entropy}, we obtain for the second derivative
\begin{align}
&\frac{\partial^2}{{\partial q}^2}\left[H(\omega_{Y})\right]-\frac{\partial^2}{{\partial q}^2}\left[H(\omega_{C})\right]\notag\\
&=\left[  \frac{\partial^2 \lambda_+ }{{\partial q}^2}\left[\frac{1}{\ln(2)}+\log(\lambda_+)\right]+\frac{1}{\ln(2)}\frac{1}{\lambda_+}\left(\frac{\partial\lambda_{+} }{\partial q}\right)^2  \right]+\left[  \frac{\partial^2 \lambda_- }{{\partial q}^2}\left[\frac{1}{\ln(2)}+\log(\lambda_-)\right]+\frac{1}{\ln(2)}\frac{1}{\lambda_-}\left(\frac{\partial\lambda_{-} }{\partial q}\right)^2  \right]\notag\\
&\quad-\frac{(1-2p)^2}{\ln(2)}\left[\frac{1}{q(1-p)+p(1-q)}+\frac{1}{qp+(1-q)(1-p)}\right]\,.
\end{align}
We may then rewrite the second derivative in terms of $c:=g^2(p,q)=1 - 16 p (1-p) q(1-q)$ as
\begin{align}
-\frac1{\log 2}\cdot\left(
\frac{(1 - 2 p)^2}{(q + p (1 - 2 q)) (1 - q - p (1 - 2 q))}
- \frac{4 (1 - p) p (1 - 2 q)^2}{c(1 - q)q} 
+  \frac{8 (1 - 2 p)^2 (1 - p) p}{c^{3/2}} \log\frac{1 - \sqrt{c}}{1 + \sqrt{c}}
    \right)\,.
\end{align}
Again, this expression is symmetric in $q$ around $q=\frac{1}{2}$. In order to find the threshold parameter $\hat{p}$, we set $q=\frac{1}{2}$, for which the middle term vanishes, leaving just 
\begin{align}
\frac4{\log 2}\cdot\frac{2p(1-p)\log\frac{1-p}{p}-(1-2p)^3}{1-2p}\,.
\end{align}
Setting this to zero and restricting $p\in [0,\frac{1}{2}]$ gives \eqref{eq:quantumEvethreshold}. One can then numerically verify that the second derivative is positive away from $q=\frac{1}{2}$, thus proving convexity of $f_{\cW[p]}(\rho_X)$ for $p\geq\hat{p}$.\qed

\section{Lower bound on $\p(\cW[p])$}\label{app:C}

\noindent We evaluate 
a lower bound to $\pnr(\cW_{YC|X}[p]^{\otimes n}\circ \cP_{X^n|X}[q])$ by evaluating $I(X':Y^n)_{\omega}-I(X':C^n)_{\omega}$ for a uniform binary input distribution (to simplify the calculation), such that
\begin{align}
\omega_{X'Y^{n}C^{n}}=\frac{1}{2} \sum_{x\in \{0,1\}} \dyad{x}{x}_{X'}\otimes \cW_{YC|X}[p]^{\otimes n}\circ \cP_{X^n|X}[q]\left(\dyad{x}{x}_{X}\right)
\end{align}
with $\cW_{YC|X}[p]$  the $\bsc{(p)}/\pure{(1-2p)}$ channel and $\cP_{X^n|X}[q]$  the pre-processing map resulting from the $n$-bit repetition encoding followed by i.i.d. bit-flip noise addition at rate $q$ by the sender. 
Here, $X'$ denotes a copy of the classical input system. Applying the pre-processing map we obtain
\begin{align}
\omega_{X'Y^{n}C^{n}}
=\frac{1}{2} \sum_{x\in \{0,1\}} \dyad{x}{x}_{X'}\otimes \cW_{YC|X}[p]^{\otimes n}\left(\sum_{w^n\in \{0,1\}^{n}} q^{|w^n\oplus (x)^{\times n}|}(1-q)^{n-|w^n\oplus (x)^{\times n}|} \dyad{w^n}{w^n}_{X^n} \right)\,,
\end{align}
where $(x)^{\times n}$ is either an all-zeros or all-ones bit-string, depending on the value of $x$ and $w^n$ denotes the bit-error pattern on the all-zeros string, and $X^n$ denotes the code-block. We can evaluate $I(X':C^n)_{\omega}$ by considering the reduced state
$\omega_{X'C^n}$, where
\begin{align}
\omega_{X'C^n}=\frac{1}{2} \sum_{x\in \{0,1\}} \dyad{x}{x}_{X'}\otimes {\cW_{C|X}[p]}^{\otimes n}\left(\sum_{w^n\in \{0,1\}^{n}} q^{|w^{n}\oplus (x)^{\times n}|}(1-q)^{n-|w^{n}\oplus (x)^{\times n}|} \dyad{w^n}{w^n}_{X^n} \right)
\end{align}
for $\cW_{C|X}[p]$ the $\pure{(1-2p)}$-channel, such that $\cW_{C|X}[p]\left(\dyad{w}{w}_{X}\right)=Z^{w}\dyad{\phi^{0}}{\phi^{0}}_{C} Z^{w}$ for $w \in \{0,1\}$ and Pauli matrix $Z$, where
\noindent $\ket{\phi^{0}}= \sqrt{1-p}\ket{0} + \sqrt{p} \ket{1}$ with $p$ being the channel parameter of the $\pure{(1-2p)}$.
We thus have
\begin{align}
\omega_{X'C^{n}}=\frac{1}{2} \sum_{x\in \{0,1\}} \dyad{x}{x}_{X'}\otimes \sum_{w^n\in \{0,1\}^{n}} q^{|w^n\oplus (x)^{\times n}|}(1-q)^{n-|w^n\oplus (x)^{\times n}|} \left(\bigotimes _{i=1}^n Z_{C_i}^{w_i} \dyad{\phi^{0}}{\phi^{0}}_{C_i} Z_{C_i}^{w_i}\right)
\end{align}
such that Charlie obtains the state
\begin{align}
\omega_{C^n}=\frac{1}{2}\sum_{w^n\in \{0,1\}^{n}} q^{|w^n|}(1-q)^{n-|w^n|}+q^{n-|w^n|}(1-q)^{|w^n|} \left(\bigotimes _{i=1}^n Z_{C_i}^{w_i} \dyad{\phi^{0}}{\phi^{0}}_{C_i}Z_{C_i}^{w_i}\right)\,.
\end{align}
\noindent Since $\omega_{X'C^n}$ is a cq-state, we can evaluate its mutual information as
\begin{align}
I(X':C^n)_{\omega}=H\left(\omega_{C^n}\right)-\frac{1}{2} \sum_{x\in \{0,1\}}H\left(\omega_{C^n}^{x}\right)
\end{align}
where $\omega_{C^n}^{x}$ is defined as
\begin{align}
\omega_{C^n}^{x}=\sum_{w^n\in \{0,1\}^{n}} q^{|w^n\oplus (x)^{\times n}|}(1-q)^{n-|w^n\oplus (x)^{\times n}|} \left(\bigotimes _{i=1}^n Z_{C_i}^{w_i} \dyad{\phi^{0}}{\phi^{0}}_{C_i} Z_{C_i}^{w_i}\right)
\end{align}
such that $\omega_{X'C^{n}}=\frac{1}{2} \sum_{x\in \{0,1\}} \dyad{x}{x}_{X}\otimes \omega_{C^n}^{x}$. Notice that $\omega_{C^n}^{x\oplus 1}= Z^{\otimes n}\omega_{C^n}^{x}Z^{\otimes n}$ and we can thus rewrite $\omega_{C^n}$ as $\omega_{C^n}=\frac{1}{2} \left(\omega_{C^n}^{0}+Z^{\otimes n}\omega_{C^n}^{0}Z^{\otimes n}\right)$. Evaluating $I(X':C^n)_{\omega}$ we then have
\begin{align}
I(X':C^n)_{\omega}&=H\left(\tfrac{1}{2} \omega_{C^n}^{0}+ \tfrac{1}{2} Z^{\otimes n}\omega_{C^n}^{0}Z^{\otimes n}\right)-\tfrac{1}{2}\left\{H\left(\omega_{C^n}^{0}\right)+H\left(Z^{\otimes n}\omega_{C^n}^{0}Z^{\otimes n}\right)\right\}\\
&=H\left(\tfrac{1}{2} \omega_{C^n}^{0}+ \tfrac{1}{2} Z^{\otimes n}\omega_{C^n}^{0}Z^{\otimes n}\right)-H\left(\omega_{C^n}^{0}\right)\,,
\end{align}
since $H\left(\omega_{C^n}^{x\oplus 1}\right)=H\left(\omega_{C^n}^{x}\right)$ due to isometric invariance of the von Neumann entropy.
Our definition of $\omega_{C^n}^{0}$ coincides with the definition of $\rho_{p,q}^{\otimes m}:= (1-q)\ketbra{\phi_{+}}+ q \ketbra{\phi_{-}}$ with $\ket{\phi_{\pm}}= \sqrt{1-p}\ket{0}\pm \sqrt{1-p}\ket{1} $ from \cite{smith_structured_2008}, so that we can write
\begin{align}
I(X':C^n)_{\omega}=H\left(\tfrac{1}{2} \rho_{p,q}^{\otimes n}+ \tfrac{1}{2} Z^{\otimes n}\rho_{p,q}^{\otimes n}Z^{\otimes n}\right)-H\left(\rho_{p,q}^{\otimes n}\right)=H\left(\tfrac{1}{2} \rho_{p,q}^{\otimes n}+ \tfrac{1}{2} Z^{\otimes n}\rho_{p,q}^{\otimes n}Z^{\otimes n}\right)-n\,H\left(\rho_{p,q}\right)\,,
\end{align}
since the von Neumann entropy is additive for product states.
We now proceed to evaluate $I(X':Y^n)_{\omega}$. To do this, we look at the combination of the $\bsc{(p)}$ with the pre-processing map $\cP_{X^n|X}[q]$. We can thus write the state $\omega_{X'Y^{n}}$ as 
\begin{align}
\omega_{X'Y^n}&=\tfrac{1}{2}\sum_{x\in\{0,1\}}\dyad{x}{x}_{X'}\otimes \cW_{Y|X}[p]^{\otimes n}\circ \cP_{X^n|X}[q]\left(\dyad{x}{x}_{X}\right)\\
&=\tfrac{1}{2}\sum_{x\in\{0,1\}}\dyad{x}{x}_{X}\otimes \sum_{w^n\in\{0,1\}^{n}} \dyad{(x)^{\times n} \oplus w^n}{(x)^{\times n} \oplus w^n}_{Y^n} 
(1-\tilde{p})^{n-|w^{n}|}\tilde{p}^{|w^{n}|}\,,
\end{align}
where $\cW_{Y|X}[p]^{\otimes n}\circ \cP_{X^n|X}[q]$ is the combined bit-flip channel with bit-flip probability $\tilde{p}=q(1-p)+p(1-q)$, and $p$ and $q$ are the bit-flip probabilities of the respective channels $\bsc{(p)}$ and pre-processing channels $\cP_{X^n|X}[q]$, and $w^n$ is the bit-flip error pattern occurring on the encoded state $\dyad{(x)^{\times n}}{(x)^{\times n}}$. Equivalently, due to isometric invariance of the von Neumann entropy, one may first apply the unitary decoding map $\identity_{X'}\otimes\decoder$ of the repetition code to the state $\omega_{X'Y^n}$ and then evaluate $I(X':V_{1}S_{2}^{n})_{\sigma}$ on the resultant state
\begin{align}
\sigma_{X'V_{1}S_{2}^{n}}= \left(\identity_{X'}\otimes\decoder\right)\left(\omega_{X'Y^n}\right)\,.
\end{align}
Here, $V_{1}$ is the first bit of the $n$-bit repetition code block, and $S_{2}^{n}:=(S_2,...,S_n)$ denotes the set of $n-1$ syndrome bits. Implementing the decoding map $\decoder$ by performing a CNOT gate from $Y_1$ to each of the bits of $Y_{2}^{n}=(Y_2,..., Y_{n})$ sequentially allows us to write
\begin{align}
\sigma_{X'V_1S_{2}^{n}}=&\;\tfrac{1}{2}\sum_{x\in\{0,1\}}\dyad{x}{x}_{X'}\otimes \sum_{w_{1}\in\{0,1\}}\sum_{w_{2}^{n}\in \{0,1\}^{n-1}} \dyad{x\oplus w_{1}}{x \oplus w_{1}}_{V_{1}}\otimes \dyad{w_{2}^{n}\oplus (w_1)^{\times (n-1)}}{w_{2}^{n}\oplus (w_1)^{\times (n-1)}}_{S_{2}^{n}}\notag\\
&(1-\tilde{p})^{m-\left(|w_1|+|w_{2}^{m}|\right)}\tilde{p}^{|w_1|+|w_{2}^{m}|}\,,
\end{align}
where $(w_1)^{\times (n-1)}$ is a bit-string of length $n-1$ that is either all-zeros or all-ones depending on the value of $w_1$. We now state two observations about the marginals of this state:
\begin{fact}[$\sigma_{V_{1}S_{2}^n}$ is a product state]
One may write the state $\sigma_{\codeblock}$ as
\begin{align}
\sigma_{\codeblock}&=\underbrace{\tfrac{1}{2} \left(\dyad{0}{0}_{V_{1}}+\dyad{1}{1}_{V_{1}}\right)}_{ \sigma_{{V}_{1}}}\otimes \underbrace{\tfrac{1}{2} \sum_{\strings} (1-\tilde{p})^{n-|w_{2}^{n}|} \tilde{p}^{|w_{2}^{n}|}\left[\dyad{w_{2}^{n}}{w_{2}^{n}}_{S_{2}^{n}}+ \dyad{w_{2}^{n}\oplus 1}{w_{2}^{n}\oplus 1}_{S_{2}^{n}} \frac{\tilde{p}}{(1-\tilde{p})}\right]}_{\sigma_{S_{2}^{n}}}\\
&=\sigma_{{V}_{1}} \otimes \sigma_{S_{2}^{n}}\,,
\end{align}
where $\sigma_{{V}_{1}}$ is the maximally mixed state on system $V_1$, such that $H(V_1)=1$. 
\end{fact}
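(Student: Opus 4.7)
The statement has two content pieces: the reduced state on $V_1$ is maximally mixed, and the joint state factorizes as $\sigma_{V_1}\otimes\sigma_{S_2^n}$. Both will follow from a direct computation of the partial trace of $\sigma_{X'V_1S_2^n}$ over $X'$, exploiting the fact that only the classical correlation between $X'$ and $V_1$ through the bit $x\oplus w_1$ couples these two systems.

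The plan is as follows. First, I will trace out $X'$ in the displayed expression for $\sigma_{X'V_1S_2^n}$, which amounts to summing over $x\in\{0,1\}$ the projectors $\dyad{x\oplus w_1}{x\oplus w_1}_{V_1}$ at fixed $(w_1,w_2^n)$. For each fixed $w_1$, that sum equals $\dyad{0}{0}_{V_1}+\dyad{1}{1}_{V_1}=\id_{V_1}$, independently of $w_1$. Pulling this identity outside the remaining sums and absorbing the prefactor $\tfrac{1}{2}$ gives $\sigma_{V_1}=\tfrac{1}{2}\id_{V_1}$, which is the claimed maximally mixed marginal. Crucially, the $V_1$ factor is now completely decoupled from the summation indices, so the remainder is automatically a tensor product.

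Next, I will simplify the leftover $S_2^n$ factor by splitting the sum over $w_1\in\{0,1\}$. For $w_1=0$ the syndrome register is in $\dyad{w_2^n}{w_2^n}_{S_2^n}$ with weight $(1-\tilde p)^{n-|w_2^n|}\tilde p^{|w_2^n|}$, while for $w_1=1$ the string $(w_1)^{\times(n-1)}$ complements $w_2^n$, giving $\dyad{w_2^n\oplus 1}{w_2^n\oplus 1}_{S_2^n}$ with weight $(1-\tilde p)^{n-1-|w_2^n|}\tilde p^{|w_2^n|+1}$. Factoring $(1-\tilde p)^{n-|w_2^n|}\tilde p^{|w_2^n|}$ out of both pieces produces precisely the expression for $\sigma_{S_2^n}$ stated in the fact, with the relative weight $\tilde p/(1-\tilde p)$ between the two projector types.

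I do not expect a substantive obstacle: the argument is bookkeeping of binary indices, and the factorization falls out because summing over $x$ kills the only coupling between $V_1$ and $(w_1,w_2^n)$. The one point to watch is the meaning of the shorthand $w_2^n\oplus 1$ in the claimed formula, which must be read as $w_2^n\oplus(1)^{\times(n-1)}$ coming from the $w_1=1$ branch; once this identification is made, the two sides agree term by term.
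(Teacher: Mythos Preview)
The paper states this fact without proof, so there is no argument of its own to compare against; your direct computation---trace out $X'$ so that $\sum_x\dyad{x\oplus w_1}{x\oplus w_1}_{V_1}=\id_{V_1}$ decouples $V_1$, then split the $w_1$ sum---is exactly the natural verification and is correct. One minor point: carrying your computation through actually yields the $S_2^n$ factor \emph{without} the leading $\tfrac{1}{2}$ shown in the Fact (as written, the displayed $\sigma_{S_2^n}$ has trace $\tfrac12$, so that prefactor is a typo in the paper), but this does not affect the two properties used downstream, namely the product structure and $H(V_1)_\sigma=1$.
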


\begin{fact}[$\sigma_{X'S_{2}^n}$ is a product state]
One may write the state $\sigma_{X'S_{2}^n}$ as 
\begin{align}
\sigma_{X'S_{2}^n}&= \underbrace{ \left(\tfrac{1}{2} \sum_{x\in\{0,1\} }\dyad{x}{x}_{X'}\right)}_{ \sigma_{X'}}\otimes \underbrace{\left(\tfrac{1}{2}\sum_{{w_1}\in \{0,1\}}\sum_{\strings} (1-\tilde{p})^{n-|w^n|}\tilde{p}^{w} \dyad{w_{2}^{n}\oplus \left(w_{1}\right)^{\times (n-1)}}{w_{2}^{n}\oplus \left(w_{1}\right)^{\times (n-1)}}\right)}_{\sigma_{S_{2}^{n}}}\\
&= \sigma_{X'}\otimes \sigma_{S_{2}^{n}}\,.
\end{align}
\end{fact}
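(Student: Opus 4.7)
The plan is to obtain $\sigma_{X'S_2^n}$ as $\tr_{V_1}\sigma_{X'V_1S_2^n}$ starting from the explicit three-partite formula for $\sigma_{X'V_1S_2^n}$ displayed immediately above Fact~2, and then to exhibit manifest $x$-independence of the surviving $\syndrome$-part so that the density operator visibly splits into a tensor product.

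Concretely, I would apply $\tr_{V_1}$ to the given expression term by term. Since $\tr\dyad{x\oplus w_1}{x\oplus w_1}_{V_1}=1$ for every $x,w_1\in\{0,1\}$, the $V_1$ ket-bras are each replaced by the scalar $1$, leaving
\[
\sigma_{X'\syndrome}=\tfrac12\sum_{\binary}\dyad{x}{x}_{X'}\otimes \sum_{w_1\in\{0,1\}}\sum_{\strings}(1-\tilde p)^{n-|w_1|-|w_2^n|}\tilde p^{|w_1|+|w_2^n|}\dyad{w_2^n\oplus (w_1)^{\times(n-1)}}{w_2^n\oplus (w_1)^{\times(n-1)}}_{\syndrome}.
\]
Neither the weights $(1-\tilde p)^{n-|w_1|-|w_2^n|}\tilde p^{|w_1|+|w_2^n|}$ nor the syndrome labels $w_2^n\oplus(w_1)^{\times(n-1)}$ involve $x$, so the outer sum over $x$ acts only on the $X'$ register. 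Hence $\sigma_{X'\syndrome}$ factors as the maximally mixed state $\sigma_{X'}=\tfrac12\sum_{\binary}\dyad{x}{x}_{X'}$ tensored with a fixed operator on $\syndrome$, which is precisely the announced product form.

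I do not anticipate any real technical obstacle: the identity is essentially by inspection once the partial trace is written out. Conceptually, Fact~2 captures the intuition that the information-bearing bit $x$, after the unitary syndrome-decoding map $\decoder$ of the repetition code is applied, is concentrated entirely in the first decoded register $V_1$; no information about $x$ leaks into $\syndrome$. This mirrors the elementary classical fact that for a linear code with uniform input distribution and a symmetric bit-flip channel acting i.i.d.\ on the codeword, the syndrome is statistically independent of the encoded message.
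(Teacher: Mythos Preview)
Your proposal is correct and is precisely the intended verification. The paper does not actually supply a proof of Fact~2; it merely announces it as one of ``two observations about the marginals'' of the explicit state $\sigma_{X'V_1S_2^n}$. Tracing out $V_1$ and noting that neither the weights nor the syndrome labels depend on $x$ is exactly how one checks the claimed product structure by inspection, so your argument matches the paper's implicit reasoning. (Incidentally, the extra factor $\tfrac12$ in the paper's displayed $\sigma_{S_2^n}$ appears to be a typo; your expression, without it, is the correctly normalized one.)
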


We can thus evaluate $I(X':Y^n)_{\omega}$ as
\begin{align}
& I(X':Y^n)_{\omega}\notag\\
&=I(X':V_{1}S_{2}^{n})_{\sigma}\quad\quad\quad\quad\quad\quad\quad \text{[via isometric invariance of the von Neumann entropy]}\\
& = H(\codeblock)_{\sigma}-H(\codeblock|X')_{\sigma} \quad\quad\quad\quad\quad\quad\quad\ \text{[via definition]}\\
& = H(V_1)_{\sigma}+H(\syndrome)_{\sigma}-H(\codeblock|X')_{\sigma} \quad\quad\quad\quad \text{[via $\sigma_{\codeblock}=\sigma_{V_1}\otimes \sigma_{S_{2}^{n}}$ + chain rule]}\\
&= 1+ H(\syndrome)_{\sigma}-H(\codeblock|X)_{\sigma}\quad\quad\quad\quad\quad\quad\quad \text{[ since $\sigma_{V_1}$ is maximally mixed]}\\
&= 1+ H(\syndrome)_{\sigma}-H(\syndrome|X')_{\sigma}-H(V_1|S_{2}^{n},X')_{\sigma}\quad\text{[via chain rule]}\\
&= 1-H(V_1|S_2^n,X')_{\sigma}\quad\quad\quad\quad\quad\quad\quad \quad\quad\quad\quad\quad\text{[via $\sigma_{X' S_{2}^{n}}=\sigma_{X'}\otimes \sigma_{S_{2}^{n}}$]}\\
&= 1-H(W_1\oplus X|S_{2}^{n},X)_{\sigma}\quad\quad\quad\quad\quad\quad\quad\quad\quad\quad\text{[since $V_{1}=W_{1}\oplus X$ and $X'=X$ via definition]}\\
&= 1-H(W_1|S_{2}^{n})_{\sigma}\\
&= 1-\sum_{s_{2}^{m}\in \{0,1\}^{n-1}} p(s_{2}^{n}) H(W_1|S_{2}^{n}=s_{2}^{n})\,.
\end{align}
We thus have 
$
I(X':Y^{n})_{\omega}= 1-\sum_{s_{2}^{n}\in \{0,1\}^{n-1}} p(s_{2}^{n}) H(W_1|S_{2}^{n}=s_{2}^{n})
$, 
such that 
\begin{align}
I(X':Y^n)_{\omega}-I(X':C^n)_{\omega}= 1-\sum_{s_{2}^{n}\in \{0,1\}^{n-1}} p(s_{2}^{n}) H(W_1|S_{2}^{n}=s_{2}^{n})-H\left(\tfrac{1}{2} \rho_{p,q}^{\otimes n}+ \tfrac{1}{2} Z^{\otimes n}\rho_{p,q}^{\otimes n}Z^{\otimes n}\right)+n\,H\left(\rho_{p,q}\right)\,.
\end{align}


\end{document}